\documentclass[a4paper,12pt]{amsart}
\textwidth=150mm
\textheight=210mm
\hoffset=-10mm
\usepackage[utf8]{inputenc}
\usepackage[T1]{fontenc}

\usepackage{xcolor}
\usepackage{url,hyperref}
\usepackage{listings}
\usepackage{spatialpart}
\usepackage{amsmath,amsthm,amssymb}
\usepackage{algorithm}
\usepackage[noend]{algpseudocode}
\usepackage{cleveref}

\theoremstyle{definition}
\newtheorem{definition}{Definition}
\newtheorem{example}[definition]{Example}
\newtheorem{remark}[definition]{Remark}

\theoremstyle{plain}
\newtheorem{proposition}[definition]{Proposition}
\newtheorem{lemma}[definition]{Lemma}
\newtheorem{theorem}[definition]{Theorem}
\newtheorem{corollary}[definition]{Corollary}

\newcommand{\N}{{\mathbb{N}}}
\newcommand{\Z}{{\mathbb{Z}}}

\newcommand{\C}{{\mathbb{C}}}
\newcommand{\F}{{\mathbb{F}}}
\newcommand{\CC}{{\mathcal{C}}}

\newcommand{\OO}{{\mathcal{O}}}
\newcommand{\PP}{{\mathcal{P}}}
\newcommand{\VV}{{\mathcal{V}}}

\newcommand{\abs}[1]{{\left\lvert#1\right\rvert}}

\DeclareMathOperator{\End}{End}

\newpartitionstyle{2d}{
    aszim   = 0, 
    elev    = 0,
    aspectX = 0.6
}
\newpartitionstyle{3d}{
    aspectX = 0.8,
    aspectZ = 1.2
}

\def\partPair{\partition[2d, scale=1]{%
    \line{0}{0}{0}{0}{0.5}{0}
    \line{0}{0.5}{0}{1}{0.5}{0}
    \line{1}{0.5}{0}{1}{0}{0}
    \point{0}{0}{0}{white}
    \point{1}{0}{0}{white}
}}

\def\partId{\partition[2d, scale=1]{%
    \line{0}{0}{0}{0}{1}{0}
    \point{0}{0}{0}{white}
    \point{0}{1}{0}{white}
}}

\definecolor{juliablue}{rgb}{0.12, 0.47, 0.71}    
\definecolor{juliagreen}{rgb}{0.2, 0.63, 0.17}    
\definecolor{juliagray}{rgb}{0.5, 0.5, 0.5}       
\definecolor{juliabracket}{rgb}{0.55, 0.34, 0.72} 
\definecolor{juliateal}{rgb}{0.0, 0.5, 0.5}       
\definecolor{juliared}{rgb}{0.84, 0.15, 0.16}     
\definecolor{juliaorange}{rgb}{0.9, 0.47, 0.13}    

\lstset{
    language=Java,
    basicstyle=\ttfamily\small,
    keywordstyle=\color{juliablue},
    commentstyle=\color{juliagray},
    stringstyle=\color{juliagreen},
    numbers=none,
    showstringspaces=false,
    breaklines=true,
    morekeywords={using}, 
    postbreak=\mbox{\textcolor{red}{$\hookrightarrow$}\space},
    xleftmargin={3ex},
    literate={In:}{{\textbf{\color{juliateal}In: }}}1
             {Out:}{{\textbf{\color{juliared}Out: }}}1
             {[}{{\textcolor{juliabracket}{[}}}1
             {]}{{\textcolor{juliabracket}{]}}}1
             {0}{{\textcolor{juliaorange}0}}1
             {1}{{\textcolor{juliaorange}1}}1
             {2}{{\textcolor{juliaorange}2}}1
             {3}{{\textcolor{juliaorange}3}}1
             {4}{{\textcolor{juliaorange}4}}1
             {5}{{\textcolor{juliaorange}5}}1
             {6}{{\textcolor{juliaorange}6}}1
             {7}{{\textcolor{juliaorange}7}}1
             {8}{{\textcolor{juliaorange}8}}1
             {9}{{\textcolor{juliaorange}9}}1,
    upquote=true
}

\begin{document}

\title{Algorithmic Problems in Categories of Partitions}

\author{Nicolas Faroß}
\address{Department of Mathematics, Saarland University, Saarbr\"ucken, Germany}
\email{faross@math.uni-sb.de}
\author{Sebastian Volz}
\address{Saarland University, Saarbr\"ucken, Germany}
\email{s8sevolz@stud.uni-saarland.de}

\date{\today}
\keywords{categories of partitions, disjoint-set structures, easy quantum groups, computability theory}
\thanks{Both authors thank their supervisor Moritz Weber for many helpful comments and suggestions.
This work is based on parts of the first author's PhD thesis as well as the second author's Bachelor's thesis.
This work has been supported by the SFB-TRR 195 \emph{Symbolic Tools in Mathematics and their Application} of the German Research
Foundation (DFG), project number  286237555.}

\begin{abstract}
Categories of partitions are combinatorial structures
arising from the representation theory of certain compact quantum groups and are linked to classical diagram algebras such as the Temperley-Lieb algebra.
In this paper, we present efficient algorithms and data-structures for partitions of sets and their corresponding category operations, including a concrete implementation in the computer algebra system OSCAR.
Moreover, we show that there exists a category of partitions for which the natural computational problems 
of deciding membership of a given partition as well as counting partitions of a given size are algorithmically undecidable.
\end{abstract}

\maketitle

\section{Introduction}

Partitions are combinatorial objects given by a row of upper and a row of lower points that are partitioned into disjoint subsets by strings, e.g.\
\[
    \partition[2d]{%
    \line{0}{0}{0}{0}{1}{0}
    \line{1}{0}{0}{2}{1}{0}
    \line{2}{0}{0}{1}{1}{0}
    \point{0}{0}{0}{white}
    \point{0}{1}{0}{white}
    \point{1}{0}{0}{white}
    \point{1}{1}{0}{white}
    \point{2}{0}{0}{white}
    \point{2}{1}{0}{white}
    }
    \qquad
    \partition[2d]{%
    \line{0}{0}{0}{0}{0.35}{0}
    \line{0}{0.35}{0}{2}{0.35}{0}
    \line{1}{0.35}{0}{1}{0}{0}
    \line{2}{0.35}{0}{2}{0}{0}
    \line{0}{0.65}{0}{1}{0.65}{0}
    \line{0}{0.65}{0}{0}{1}{0}
    \line{1}{0.65}{0}{1}{1}{0}
    \point{0}{0}{0}{white}
    \point{0}{1}{0}{white}
    \point{1}{0}{0}{white}
    \point{1}{1}{0}{white}
    \point{2}{0}{0}{white}
    }
    \qquad
    \partition[2d]{%
    \line{0}{0}{0}{0}{1}{0}
    \line{0}{0.55}{0}{2}{0.55}{0}
    \line{1}{0}{0}{1}{0.3}{0}
    \line{2}{0.55}{0}{2}{0}{0}
    \point{0}{0}{0}{white}
    \point{0}{1}{0}{white}
    \point{1}{0}{0}{white}
    \point{2}{0}{0}{white}
    }
    \qquad
    \dots
\]
Given a set of partitions, we can construct new partitions by applying various operations like a tensor product, an involution or a composition. 
A category of partitions in the sense of~\cite{weber17} is a set of partitions that is closed under these operations and contains the base partitions $\partId$ and $\partPair$.

Partitions appear naturally in the definition diagram algebras such as the Brauer algebra~\cite{brauer37} and are linked to the representation theory of certain classical groups via Schur-Weyl duality~\cite{fulton04}. In~\cite{banica09}, Banica-Speicher introduced so-called easy quantum groups, a class of compact quantum groups defined from categories of partitions. These quantum groups extend Schur-Weyl duality to a larger class of diagram algebras including the Temperley-Lieb algebra~\cite{temperly71}, which has many applications from knot theory to quantum mechanics~\cite{abramsky07}.

Interest in easy quantum groups has lead to the study of categories of 
partitions in~\cite{banica10a,weber13} with a full classification completed in~\cite{raum16a}. Moreover, there exist various generalizations like colored partitions~\cite{tarrago18}
and three-dimensional spatial partitions~\cite{cebron16} for which the classification is still ongoing.
We refer to~\cite{weber17, freslon23b}
for an introduction to compact quantum groups defined by categories of partitions and to~\cite{levandovskyy22,faross24a,corey23} for more information on the recent use of computational methods in the setting of compact quantum groups.

\subsection{Main results}
In the first part, we present efficient algorithms and data-structures for partitions, their category operations as well as generalizations to colored and spatial partitions. Moreover, we provide a concrete implementation of these algorithms and data-structures in the computer algebra system OSCAR~\cite{OSCAR}. This enables the study of easy quantum groups via computational methods, possibly leading to advances in the ongoing classification of categories of colored and spatial partitions. In addition, our algorithm for computing the composition of two partitions runs in quasi-linear time, which improves on the worst-case complexity of existing implementations, for example in~\cite{sagemath}.
 
In the second part, we study the computational problems of deciding membership of a given partition as well as counting partitions of a given size in categories of partitions. These problems appear naturally when studying categories of partitions and their corresponding quantum groups and can be easily solved for many concrete examples. However, our main theorem now shows that there exists a concrete category of partitions for which these problems are algorithmically undecidable.

\begin{theorem}[\Cref{thm:membership-undecidable}, \Cref{corr:counting-undecidable}]
There exists a recursively enumerable category of partitions $\CC$ for which the following problems are algorithmically undecidable: 
\begin{enumerate}
  \item Decide if $p \in \CC$ for a given partition $p$.
  \item Determine the number of partitions in $\CC$ with $k$ upper and $\ell$  lower points for given $k, \ell \in \N$.
\end{enumerate}
\end{theorem}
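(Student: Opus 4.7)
The plan is to establish undecidability of the membership problem by encoding an undecidable problem directly into the generating structure of a suitable category of partitions, and then to deduce the counting undecidability from this as a corollary.

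For the membership part, I would fix a recursively enumerable but non-recursive set $A \subseteq \N$, such as the halting set of a fixed universal Turing machine $M$. The aim is then to construct a recursively enumerable generating set $\mathcal{G}$ of partitions together with a recursive family of ``test partitions'' $(p_n)_{n \in \N}$ such that
\[
  p_n \in \langle \mathcal{G} \rangle \iff n \in A,
\]
where $\langle \mathcal{G} \rangle = \CC$ denotes the category of partitions generated by $\mathcal{G}$ together with the base partitions $\partId$ and $\partPair$. A natural route is to simulate $M$: encode each configuration of $M$ rigidly as a partition storing the tape contents and head position in its block structure, encode each transition of $M$ as a generating partition whose composition with the current configuration-partition produces the partition of the next configuration, and choose $p_n$ to represent the initial configuration on input $n$ stacked against a ``halting detector'' whose reachability forces the existence of a halting computation. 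Since $\mathcal{G}$ is recursive, $\CC$ is recursively enumerable, while the membership problem inherits undecidability from $A$.

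The counting undecidability then follows by a standard enumeration argument. Suppose, toward a contradiction, that the function $f(k,\ell) := \#\{p \in \CC : p \text{ has } k \text{ upper and } \ell \text{ lower points}\}$ were computable. Given any partition $p$ with $k$ upper and $\ell$ lower points, one would compute $f(k,\ell)$, then enumerate the recursively enumerable set $\CC$ until exactly $f(k,\ell)$ distinct partitions of size $(k,\ell)$ have been produced, and finally check whether $p$ appears among them. This would decide membership, contradicting the first part.

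The main obstacle lies in the construction of $\mathcal{G}$. The category operations (composition, tensor, involution) are powerful and flexible, so a naive encoding will produce ``unintended'' derivations that allow a test partition $p_n$ to be assembled even when $n \notin A$. The real technical challenge is therefore to design the configuration- and transition-partitions rigidly enough that the \emph{only} way to reach $p_n$ from $\mathcal{G}$ is by a derivation faithfully mirroring a genuine halting computation of $M$. Plausible tools include partitions with many singleton blocks or specific crossing patterns that cannot be reassembled under the category operations, combined with a ``syntactic scaffold'' surrounding the simulation that pins down the form of any possible derivation and rules out shortcuts. Verifying soundness and completeness of the simulation would then proceed by induction on the length of a derivation, matching each step against a legal transition of $M$.
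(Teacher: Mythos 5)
Your reduction of the counting problem to the membership problem is exactly the argument the paper uses for \Cref{corr:counting-undecidable}, and it is correct: compute $\abs{\CC(k,\ell)}$, enumerate the recursively enumerable set $\CC$ until that many distinct partitions with $k$ upper and $\ell$ lower points have appeared, and test whether $p$ is among them.

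The membership part, however, has a genuine gap, and you have named it yourself: everything hinges on constructing a generating set $\mathcal{G}$ and test partitions $p_n$ such that $p_n \in \langle \mathcal{G}\rangle$ exactly when $n$ lies in the halting set, and you give no construction that achieves this. The difficulty is not a routine verification you have deferred; it is the entire content of the theorem. A category of partitions is closed under composition, tensor products, involution, and (as a consequence of containing $\partId$ and $\partPair$) under all rotations and vertical reflections, so any ``configuration'' or ``transition'' partition you put into $\mathcal{G}$ can be rotated into a one-line partition, tensored with itself, capped off, and recombined in ways that have nothing to do with a run of the machine $M$. Controlling \emph{all} derivations in $\langle\mathcal{G}\rangle$ by induction on derivation length is precisely what nobody knows how to do directly, and your proposal offers only the hope that singleton blocks or crossing patterns might provide enough rigidity. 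Without at least a concrete candidate encoding and a proof of its soundness, the claimed equivalence $p_n \in \CC \iff n \in A$ is unsupported.

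The paper avoids this problem entirely by not reasoning about derivations at all. It reduces from Kleiman's undecidable identity problem for a finitely based variety of groups, using the correspondence between varieties and fully invariant subgroups of $\F_\infty$, an embedding of $\F_\infty$ as the even-length subgroup $E \subseteq \Z_2^{*\infty}$, and the Raum--Weber bijection between $sS_\infty$-invariant normal subgroups of $\Z_2^{*\infty}$ and hyperoctahedral categories of partitions. That bijection gives an \emph{exact} description of $\CC(0,n)$ in terms of subgroup membership, so the question of which partitions are ``reachable'' is answered by group theory rather than by a rigidity analysis of the category operations. If you want to pursue your direct-simulation route, you would need to supply an argument of comparable strength characterizing $\langle\mathcal{G}\rangle$ completely; as written, the proposal establishes the corollary but not the theorem it depends on.
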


Here, a category of partitions is called recursively enumerable if its elements can be enumerated by a Turing machine, and a problem is called algorithmically undecidable if there exists no Turing machine solving the corresponding problem for all possible input values in finite time.

\subsection{Structure of the paper}

We begin with some preliminaries in \Cref{sec:cat-of-partitions}, where we introduce various types of partitions, their basic operations as well as categories of partitions.
Then, we present efficient algorithms and data-structures 
for partitions and their operations in \Cref{sec:algos-structs}, including a short demo of our OSCAR implementation in \Cref{sec:oscar-impl}.
Next, we study algorithmic problems for categories of partitions in \Cref{sec:undeciability}, where we prove that the problems of deciding membership of a partition and counting partitions of a given size are generally undecidable.
Finally, we finish with some concluding remarks and open questions in \Cref{sec:conclusion}.

\numberwithin{definition}{section}

\section{Preliminaries}\label{sec:cat-of-partitions}

\subsection{Partitions}\label{sec:def_partition}

We begin by introducing basic partitions, which are the main combinatorial objects used in the following sections to define categories of partitions.

\begin{definition}[Partition]\label{def:partition}
Let $k, \ell \in \mathbb{N}$. A \emph{partition} with $k$ upper and $\ell$ upper points is given by a decomposition of the set $\{1, \dots, k + \ell\}$ into disjoint subsets called blocks.
\end{definition}

In the following, we denote by $\PP$ the set of all partitions, and by $\PP(k, \ell)$ the set of all partitions with $k$ upper and $\ell$ lower points.
Given a partition $p \in \PP(k, \ell)$, we can visualize it as a string-diagram as follows.
First, we draw a row of $k$ upper points labeled by $1, \dots, k$ and a row of $\ell$ lower points
labeled by $k+1, \dots, k+\ell$.
Then, we connect points by strings if their corresponding labels are in the same block in $p$. Finally, we omit the labels and visualize each point as simple dot.

\begin{example}\label{ex:draw-partition}
Consider the partition $p \in \PP(4, 3)$ with underlying decomposition given by $\{\{1, 6\}, \{2, 5\}, \{3, 4, 7\}\}$. Then we can visualize $p$ as follows:
\[
\begin{array}{c}
\{\{1, 6\}, \{2, 5\}, \\
\{3, 4, 7\}\}
\end{array}
\quad \cong \quad
\partition[2d]{%
    \line{1}{0}{0}{0}{1}{0}
    \line{0}{0}{0}{1}{1}{0}
    \line{2}{0}{0}{2}{1}{0}
    \line{2}{0.5}{0}{3}{0.5}{0}
    \line{3}{0.5}{0}{3}{1}{0}
    \label{0}{1.2}{0}{\Huge 1}
    \label{1}{1.2}{0}{\Huge 2}
    \label{2}{1.2}{0}{\Huge 3}
    \label{3}{1.2}{0}{\Huge 4}
    \label{0}{-0.2}{0}{\Huge 5}
    \label{1}{-0.2}{0}{\Huge 6}
    \label{2}{-0.2}{0}{\Huge 7}
}
\quad \cong \quad
\partition[2d]{%
    \line{1}{0}{0}{0}{1}{0}
    \line{0}{0}{0}{1}{1}{0}
    \line{2}{0}{0}{2}{1}{0}
    \line{2}{0.5}{0}{3}{0.5}{0}
    \line{3}{0.5}{0}{3}{1}{0}
    \point{1}{0}{0}{white}
    \point{0}{1}{0}{white}
    \point{1}{1}{0}{white}
    \point{0}{0}{0}{white}
    \point{2}{0}{0}{white}
    \point{2}{1}{0}{white}
    \point{3}{1}{0}{white}
}
\]
\end{example}

\begin{remark}
Note that we distinguish partitions with a different number of upper and lower points, e.g.\ the following 
partitions all have the underlying decomposition \{\{1, 2\}\}
but are distinct:
\[
    \partPair \in \PP(0, 2), \qquad
    \partId \in \PP(1, 1), \qquad
    \partition[2d]{%
        \line{0}{1}{0}{0}{0.5}{0}
        \line{0}{0.5}{0}{1}{0.5}{0}
        \line{1}{0.5}{0}{1}{1}{0}
        \point{0}{1}{0}{white}
        \point{1}{1}{0}{white}
    } \in \PP(2, 0).
\]
\end{remark}

\subsection{Operations on partitions}\label{sec:op_partition}

Given a set of partitions, we can construct new partitions using the following basic operations called tensor product, involution and composition.

\begin{definition}[Tensor product] Let $p \in \PP(k_1, \ell_1)$ and $q \in \PP(k_2, \ell_2)$. Then their \textit{tensor product} $p \otimes q \in \PP(k_1 + k_2, \ell_1 + \ell_2)$ is obtained by concatenating $p$ and $q$ horizontally, e.g.\
\[
\partition[2d]{%
    \line{1}{0}{0}{0}{1}{0}
    \line{0}{0}{0}{1}{1}{0}
    \point{1}{0}{0}{white}
    \point{0}{1}{0}{white}
    \point{1}{1}{0}{white}
    \point{0}{0}{0}{white}
}
\quad \otimes \quad
\partition[2d]{%
    \line{2}{0}{0}{2}{1}{0}
    \line{2}{0.5}{0}{3}{0.5}{0}
    \line{3}{0.5}{0}{3}{1}{0}
    \point{2}{0}{0}{white}
    \point{2}{1}{0}{white}
    \point{3}{1}{0}{white}
}
\quad = \quad
\partition[2d]{%
    \line{1}{0}{0}{0}{1}{0}
    \line{0}{0}{0}{1}{1}{0}
    \line{2}{0}{0}{2}{1}{0}
    \line{2}{0.5}{0}{3}{0.5}{0}
    \line{3}{0.5}{0}{3}{1}{0}
    \point{1}{0}{0}{white}
    \point{0}{1}{0}{white}
    \point{1}{1}{0}{white}
    \point{0}{0}{0}{white}
    \point{2}{0}{0}{white}
    \point{2}{1}{0}{white}
    \point{3}{1}{0}{white}
}.
\]
\end{definition}

\begin{definition}[Involution]
Let $p \in \PP(k, \ell)$. Then the \textit{involution} $p^* \in \PP(\ell, k)$ is obtained by reflecting the upper and lower points along a horizontal axis, e.g.\
\[
\left(~
\partition[2d]{%
    \line{2}{0}{0}{2}{1}{0}
    \line{2}{0.35}{0}{3}{0.35}{0}
    \line{3}{0.35}{0}{3}{0}{0}
    \line{4}{0}{0}{4}{0.35}{0}
    \line{3}{1}{0}{3}{0.65}{0}
    \line{3}{0.65}{0}{4}{0.65}{0}
    \line{4}{0.65}{0}{4}{1}{0}
    \point{2}{0}{0}{white}
    \point{2}{1}{0}{white}
    \point{3}{0}{0}{white}
    \point{4}{0}{0}{white}
    \point{3}{1}{0}{white}
    \point{4}{1}{0}{white}
}
~\right)^{*}
\quad = \quad
\partition[2d]{%
    \line{2}{1}{0}{2}{0}{0}
    \line{2}{0.65}{0}{3}{0.65}{0}
    \line{3}{0.65}{0}{3}{1}{0}
    \line{4}{1}{0}{4}{0.65}{0}
    \line{3}{0}{0}{3}{0.35}{0}
    \line{3}{0.35}{0}{4}{0.35}{0}
    \line{4}{0.35}{0}{4}{0}{0}
    \point{2}{1}{0}{white}
    \point{2}{0}{0}{white}
    \point{3}{1}{0}{white}
    \point{4}{1}{0}{white}
    \point{3}{0}{0}{white}
    \point{4}{0}{0}{white}
}.
\]
\end{definition} 

\begin{definition}[Composition]\label{def: comp}
Let $p \in P(\ell, m)$ and $q \in P(k, \ell)$. Then the \textit{composition} $pq \in P(k, m)$ is obtained placing $q$ above $p$ and identifying each lower point of $q$ with the respective upper point of $p$. Then, these intermediate points are removed such that only the upper points of $q$ and the lower points of $p$ are left, e.g.\
\[
\partition[2d]{%
    \line{0}{0}{0}{0}{1}{0}
    \line{1}{1}{0}{1}{0}{0}
    \line{1}{0.5}{0}{2}{0.5}{0}
    \line{2}{0.5}{0}{2}{1}{0}
    \point{0}{0}{0}{white}
    \point{0}{1}{0}{white}
    \point{1}{1}{0}{white}
    \point{2}{1}{0}{white}
    \point{1}{0}{0}{white}
}
\quad \cdot \quad
\partition[2d]{%
    \line{0}{0}{0}{0}{0.5}{0}
    \line{0}{0.5}{0}{1}{0.5}{0}
    \line{1}{0.5}{0}{1}{0}{0}
    \line{2}{0}{0}{2}{1}{0}
    \point{0}{0}{0}{white}
    \point{1}{0}{0}{white}
    \point{2}{0}{0}{white}
    \point{2}{1}{0}{white}
}
\quad = \quad
\partition[2d]{%
    \line{0}{0}{0}{0}{1}{0}
    \line{1}{1}{0}{1}{0}{0}
    \line{1}{0.5}{0}{2}{0.5}{0}
    \line{2}{0.5}{0}{2}{1}{0}
    \line{0}{1}{0}{0}{1.5}{0}
    \line{0}{1.5}{0}{1}{1.5}{0}
    \line{1}{1.5}{0}{1}{1}{0}
    \line{2}{1}{0}{2}{2}{0}
    \point{0}{0}{0}{white}
    \point{0}{1}{0}{white}
    \point{1}{1}{0}{white}
    \point{2}{1}{0}{white}
    \point{1}{0}{0}{white}
    \point{2}{2}{0}{white}
}
\quad = \quad
\partition[2d]{%
    \line{1}{1}{0}{1}{0}{0}
    \line{2}{0}{0}{2}{0.5}{0}
    \line{1}{0.5}{0}{2}{0.5}{0}
    \point{1}{1}{0}{white}
    \point{1}{0}{0}{white}
    \point{2}{0}{0}{white}
}.
\]
\end{definition}

\begin{remark}
Note that the composition of two partitions $p$ and $q$ is only defined if the number of lower points of $q$ equals the number of upper points of $p$.
Moreover, when removing the intermediate points, any isolated components that are not connected to the top or bottom row are simply discarded. In particular, they do not contribute a loop factor as in the case of diagram algebras.
\end{remark}

In addition to the previous operations, it is often useful to additionally consider rotations and vertical reflections. We refer to~\cite{raum16a} for more information the use of these operations and the combintorics of partitions in general.

\begin{definition}[Rotations]
Let $p \in \PP(k, \ell)$ with $k > 0$. Then the \textit{(top-left) rotation} $p^{\mathit{rot}} \in \PP(k - 1, \ell + 1)$ is obtained by rotating the left-most upper point to the left-most position in the lower row while preserving the block structure, e.g.\ 
\[
\left(~
\partition[2d]{%
    \line{1}{0}{0}{1}{1}{0}
    \line{0}{0.65}{0}{1}{0.65}{0}
    \line{0}{0.65}{0}{0}{1}{0}
    \line{0}{0}{0}{0}{0.35}{0}
    \point{0}{0}{0}{white}
    \point{0}{1}{0}{white}
    \point{1}{1}{0}{white}
    \point{1}{0}{0}{white}
}
~\right)^{\mathit{rot}}
\quad = \quad
\partition[2d]{%
    \line{2}{0}{0}{2}{1}{0}
    \line{2}{0.6}{0}{4}{0.6}{0}
    \line{4}{0.6}{0}{4}{0}{0}
    \line{3}{0}{0}{3}{0.3}{0}
    \point{2}{0}{0}{white}
    \point{2}{1}{0}{white}
    \point{3}{0}{0}{white}
    \point{4}{0}{0}{white}
}.
\]

Analogously, we can also define a rotation of the left-most lower point to the upper row as well as rotations of the right-most upper point to the lower row and vice versa.
\end{definition} 

\begin{definition}[Vertical reflection]
Let $p \in \PP(k, \ell)$. Then the  \textit{vertical reflection} $p^\leftrightarrow \in \PP(k, \ell)$ is obtained by reflecting the upper and lower points along the vertical axis, e.g.\
\[
\left(~
\partition[2d]{%
    \line{2}{0}{0}{2}{1}{0}
    \line{2}{0.35}{0}{3}{0.35}{0}
    \line{3}{0.35}{0}{3}{0}{0}
    \line{4}{0}{0}{4}{0.35}{0}
    \line{3}{1}{0}{3}{0.65}{0}
    \line{3}{0.65}{0}{4}{0.65}{0}
    \line{4}{0.65}{0}{4}{1}{0}
    \point{2}{0}{0}{white}
    \point{2}{1}{0}{white}
    \point{3}{0}{0}{white}
    \point{4}{0}{0}{white}
    \point{3}{1}{0}{white}
    \point{4}{1}{0}{white}
}
~\right)^{\leftrightarrow}
\quad = \quad
\partition[2d]{%
    \line{4}{0}{0}{4}{1}{0}
    \line{3}{0.35}{0}{4}{0.35}{0}
    \line{3}{0.35}{0}{3}{0}{0}
    \line{2}{0}{0}{2}{0.35}{0}
    \line{2}{1}{0}{2}{0.65}{0}
    \line{2}{0.65}{0}{3}{0.65}{0}
    \line{3}{0.65}{0}{3}{1}{0}
    \point{3}{0}{0}{white}
    \point{4}{1}{0}{white}
    \point{4}{0}{0}{white}
    \point{2}{0}{0}{white}
    \point{2}{1}{0}{white}
    \point{3}{1}{0}{white}
}.
\]
\end{definition} 

\subsection{Categories of partitions}\label{sec:categories}

Given partitions and their basic operations, we can finally introduce categories of partitions.

\begin{definition}[Category of partitions]
A \emph{category of partitions} is a subset $\CC \subseteq \PP$ that contains the \emph{base partitions} $\partId \in \PP(1, 1)$ and $\partPair \in \PP(0, 2)$ 
and is closed under tensor products, involutions and compositions.
\end{definition}

As in the case of all partitions, we denote with 
$\CC(k, \ell) := \CC \cap \PP(k, \ell)$ the set of all partitions in a category $\CC$ with $k$ upper and $\ell$ lower points.

\begin{remark}
Note that categories of partitions are automatically closed under rotations and vertical reflections since these operations can be constructed by combining the previous category operations with the base partitions $\partPair$ and $\partId$. See~\cite{raum16a} for further details.
\end{remark}

\begin{example}\label{ex:categories}
Examples of categories of partitions 
include
\begin{enumerate}
\item the set $\PP$ of all partitions.
\item the set $\PP_2$ of all \emph{pair partitions}, i.e.\ partitions with all blocks exactly of size two.
\item the set $\mathcal{B}$ of all \emph{balanced partitions}, i.e.\ partitions with the same number of ``even'' and ``odd'' points in each block. 
\item the set $\mathcal{NC}$ of all \emph{non-crossing} partitions, i.e.\ partitions that can be drawn as a string diagram without crossing lines.
\end{enumerate}
\end{example}

Note that categories of partitions are closed under intersections. Thus, any combination of the previous properties defines again a category of partitions. See also~\cite{raum16a} for further examples. In particular, it is also possible to define categories of partitions via a generating set of partitions.

\begin{definition}[Generating set]
Let $S \subseteq \PP$ be a set of partitions. Then we denote with $\langle S \rangle$ the \textit{category generated by $S$}.
It is the smallest category of partitions containing $S$ and consists of all finite combinations of the base partitions, elements of $S$ and the category operations. Additionally, we call a category $\CC$ \emph{finitely generated} if  $\CC = \langle S \rangle$ for a finite set $S$.
\end{definition}

\begin{example}\label{ex:categories-gens}
Consider again the previous categories of partitions from \Cref{ex:categories}. Then these categories can alternative be defined as follows:
\begin{alignat*}{2}
\PP &= \Big\langle \
\partition[2d]{%
    \line{0}{0}{0}{0}{1}{0}
    \line{1}{0}{0}{1}{0.5}{0}
    \line{0}{0.5}{0}{1}{0.5}{0}
    \point{0}{0}{0}{white}
    \point{0}{1}{0}{white}
    \point{1}{0}{0}{white}
}, \ 
\partition[2d]{
    \line{0}{0}{0}{1}{1}{0}
    \line{1}{0}{0}{0}{1}{0}
    \point{0}{0}{0}{white}
    \point{0}{1}{0}{white}
    \point{1}{0}{0}{white}
    \point{1}{1}{0}{white}
}
\ \Big\rangle, 
\qquad & \PP_2 &= \Big\langle \
\partition[2d]{%
    \line{0}{0}{0}{1}{1}{0}
    \line{1}{0}{0}{0}{1}{0}
    \point{0}{0}{0}{white}
    \point{0}{1}{0}{white}
    \point{1}{0}{0}{white}
    \point{1}{1}{0}{white}
}
\ \Big\rangle, \\
\mathcal{B} &= \Big\langle \
\partition[2d]{%
    \line{0}{0}{0}{0}{0.33}{0}
    \line{0}{1}{0}{0}{0.67}{0}
    \line{1}{0}{0}{1}{0.33}{0}
    \line{1}{1}{0}{1}{0.67}{0}
    \line{0}{0.33}{0}{1}{0.33}{0}
    \line{0}{0.67}{0}{1}{0.67}{0}
    \line{0.5}{0.33}{0}{0.5}{0.67}{0}
    \point{0}{0}{0}{white}
    \point{0}{1}{0}{white}
    \point{1}{0}{0}{white}
    \point{1}{1}{0}{white}
}, \ 
\partition[2d]{
    \line{0}{0}{0}{2}{1}{0}
    \line{1}{0}{0}{1}{1}{0}
    \line{2}{0}{0}{0}{1}{0}
    \point{0}{0}{0}{white}
    \point{0}{1}{0}{white}
    \point{1}{0}{0}{white}
    \point{1}{1}{0}{white}
    \point{2}{0}{0}{white}
    \point{2}{1}{0}{white}
}
\ \Big\rangle, 
\qquad & \mathcal{NC} &= \Big\langle \
\partition[2d]{%
    \line{0}{0}{0}{0}{1}{0}
    \line{1}{0}{0}{1}{0.5}{0}
    \line{0}{0.5}{0}{1}{0.5}{0}
    \point{0}{0}{0}{white}
    \point{0}{1}{0}{white}
    \point{1}{0}{0}{white}
}
\ \Big\rangle.
\end{alignat*}
We refer to~\cite{raum16a} for further information and additional examples of categories of partitions that are not necessarily finitely generated.
\end{example}

\begin{remark}
Let us comment quickly on the connection between categories of partitions and compact quantum groups. Given $n \in \N$ and a category of partitions $\CC$, we can associate a linear operator
$
    T_p \colon {(\C^n)}^{\otimes k} \to {(\C^n)}^{\otimes k} 
$
to every partition $p \in \CC(k, \ell)$. These operators define a concrete $C^*$-tensor category, which corresponds to a compact quantum group $G_n$ with distinguished fundamental representation $u$ via Woronowicz-Tannaka-Krein duality~\cite{woronowicz88}. In the case of all partitions, we obtain for example the classical symmetric group $S_n$, whereas the category $\mathcal{NC}$ of all non-crossing partitions corresponds to Wang's quantum permutation group $S_n^+$~\cite{wang98}.
We refer to~\cite{weber17,freslon23b} for more details on their construction and to~\cite{timmermann08} for the theory compact quantum groups in general.
\end{remark}

\subsection{Generalizations of partitions}\label{sec:generalization}

In addition to the previous partitions, there exists a generalization to colored partitions~\cite{tarrago18}, which yield the class of unitary easy quantum groups~\cite{tarrago17}. 

\begin{definition}[Colored partition]
A \emph{colored partition} is given by a partition $p \in \PP(k, \ell)$ with colorings 
$x \in {\{\circ, \bullet\}}^k$ and $y \in {\{\circ, \bullet\}}^\ell$
of its upper and lower points respectively.
\end{definition}

\begin{example}
Consider again the partition $p \in \PP(4, 3)$
from \Cref{ex:draw-partition}.
By choosing the upper colors ${{\circ}{\bullet}{\bullet}{\circ}}$ and the lower colors ${{\circ}{\bullet}{\circ}}$, we obtain the colored following colored partition:
\[
\partition[2d]{%
    \line{1}{0}{0}{0}{1}{0}
    \line{0}{0}{0}{1}{1}{0}
    \line{2}{0}{0}{2}{1}{0}
    \line{2}{0.5}{0}{3}{0.5}{0}
    \line{3}{0.5}{0}{3}{1}{0}
    \point{1}{0}{0}{black}
    \point{0}{1}{0}{white}
    \point{1}{1}{0}{black}
    \point{0}{0}{0}{white}
    \point{2}{0}{0}{white}
    \point{2}{1}{0}{black}
    \point{3}{1}{0}{white}
}.
\]
\end{example}

\begin{remark}
The operations for partitions can be directly extended to colored partitions by additionally requiring that not only the number of upper and lower points but also their colors agree when composing to partitions $p$ and $q$.
Moreover, in categories of colored partitions,
the the base partition $\partId$ and $\partPair$
are replaced by the \emph{colored base partitions} \partition[2d]{%
    \line{0}{0}{0}{0}{1}{0}
    \point{0}{0}{0}{white}
    \point{0}{1}{0}{white}
}, 
\partition[2d]{%
    \line{0}{0}{0}{0}{1}{0}
    \point{0}{0}{0}{black}
    \point{0}{1}{0}{black}
}
and
\partition[2d]{%
    \line{0}{0}{0}{0}{0.5}{0}
    \line{0}{0.5}{0}{1}{0.5}{0}
    \line{1}{0.5}{0}{1}{0}{0}
    \point{0}{0}{0}{white}
    \point{1}{0}{0}{black}
}, 
\partition[2d]{%
    \line{0}{0}{0}{0}{0.5}{0}
    \line{0}{0.5}{0}{1}{0.5}{0}
    \line{1}{0.5}{0}{1}{0}{0}
    \point{0}{0}{0}{black}
    \point{1}{0}{0}{white}
}.
\end{remark}

More recently, Cébron-Weber~\cite{cebron16} also introduced spatial partitions by extending the previous partitions along multiple levels into the third dimension.

\begin{definition}[Spatial partition]
Let $k, \ell, m \in \N$. Then a \emph{spatial partition} with $k$ upper points, $\ell$ lower points and $m$ levels is a decomposition of the set $\{1, \dots, k + \ell\} \times \{1, \dots, m\}$ into disjoint subsets called blocks.
\end{definition}

As in the case of classical partitions, the first component $\{1, \dots, k + \ell\}$ divides a spatial partition into $k$ upper and $\ell$ lower points, whereas the second component $\{1, \dots, m\}$ divides the partition into $m$ levels along the third dimension. In the following, we denote with $\PP^{(m)}$ the set of all spatial partitions on $m$ levels, and with $\PP^{(m)}(k, \ell)$ the subset consisting of all spatial partitions with $k$ upper points and $\ell$ lower points.

\begin{example}
The following example visualizes a spatial partition $p \in P^{(2)}(1,2)$ with one upper point, two lower points and two levels:
\[
\begin{array}{c}
\scriptstyle\{\{(1, 1), (2, 1)\},
\scriptstyle\{(1, 2), (3, 2)\}, \\
\scriptstyle\{(2, 2), (3, 1)\}\}
\end{array}
\quad \cong \quad
\partition[3d]{
  %
  %
  %
  \line{0}{0}{0}{0}{1}{0}
  \line{1}{0}{1}{0}{1}{1}
  \line{0}{0}{1}{0}{0.3}{1}
  \line{0}{0.3}{1}{1}{0.3}{0}
  \line{1}{0.3}{0}{1}{0}{0}
  \label{0}{1}{0}{\Huge (1,1)}
  \label{0}{1}{1}{\Huge (1,2)}
  \label{0}{0}{0}{\Huge (2,1)}
  \label{0}{0}{1}{\Huge (2,2)}
  \label{1}{0}{0}{\Huge (3,1)}
  \label{1}{0}{1}{\Huge (3,2)}
}
\quad \cong \quad
\partition[3d]{
  %
  %
  %
  \line{0}{0}{0}{0}{1}{0}
  \line{1}{0}{1}{0}{1}{1}
  \line{0}{0}{1}{0}{0.3}{1}
  \line{0}{0.3}{1}{1}{0.3}{0}
  \line{1}{0.3}{0}{1}{0.3}{0}
  \line{1}{0.3}{0}{1}{0}{0}
  \point{0}{0}{0}{white}
  \point{1}{0}{0}{white}
  \point{0}{1}{0}{white}
  \point{0}{0}{1}{white}
  \point{1}{0}{1}{white}
  \point{0}{1}{1}{white}
}
\]
\end{example}

\begin{remark}
As before, we can extend the basic operations on partitions to spatial partitions on the same number of levels. 
\begin{enumerate}
\item Tensor product:
\[
  \partition[3d]{%
  \line{0}{0}{0}{0}{1}{0}
  \line{0}{0}{1}{0}{1}{1}
  \point{0}{1}{0}{white}
  \point{0}{1}{1}{white}
  \point{0}{0}{0}{white}
  \point{0}{0}{1}{white}
  }
  \quad \otimes \quad 
  \partition[3d]{%
  \line{0}{0}{0}{0}{0.35}{0}
  \line{0}{0}{1}{0}{0.35}{1}
  \line{0}{0.35}{0}{0}{0.35}{1}
  \point{0}{0}{0}{white}
  \point{0}{0}{1}{white}
  } 
  \quad = \quad 
  \partition[3d]{%
  \line{0}{0}{0}{0}{1}{0}
  \line{0}{0}{1}{0}{1}{1}
  \line{1}{0}{0}{1}{0.35}{0}
  \line{1}{0}{1}{1}{0.35}{1}
  \line{1}{0.35}{0}{1}{0.35}{1}
  \point{0}{1}{0}{white}
  \point{0}{1}{1}{white}
  \point{0}{0}{0}{white}
  \point{0}{0}{1}{white}
  \point{1}{0}{0}{white}
  \point{1}{0}{1}{white}
  }
\]
\item Involution:
\[
  {\left( \ \partition[3d]{%
    \line{0}{0.65}{0}{0}{1}{0}
    \line{0}{0}{0}{0}{0.35}{0}
    \line{0}{0.35}{0}{0}{0.35}{1}
    \line{0}{0}{1}{0}{1}{1}
    \line{1}{0}{1}{1}{0.35}{1}
    \line{1}{0}{0}{1}{0.35}{0}
    \line{1}{0.35}{1}{1}{0.35}{0}
    \point{0}{0}{0}{white}
    \point{0}{0}{1}{white}
    \point{1}{0}{0}{white}
    \point{1}{0}{1}{white}
    \point{0}{1}{0}{white}
    \point{0}{1}{1}{white}} \ \right)}^*
    \quad = \quad
    \partition[3d]{%
    \line{0}{0.35}{0}{0}{0}{0}
    \line{0}{1}{0}{0}{0.65}{0}
    \line{0}{0.65}{0}{0}{0.65}{1}
    \line{0}{1}{1}{0}{0}{1}
    \line{1}{1}{1}{1}{0.65}{1}
    \line{1}{1}{0}{1}{0.65}{0}
    \line{1}{0.65}{1}{1}{0.65}{0}
    \point{0}{1}{0}{white}
    \point{0}{1}{1}{white}
    \point{1}{1}{0}{white}
    \point{1}{1}{1}{white}
    \point{0}{0}{0}{white}
    \point{0}{0}{1}{white}}
\]
\item Composition:
\[
  \partition[3d]{%
  \line{0}{0}{0}{0}{1}{0}
  \line{0}{0}{1}{0}{1}{1}
  \line{1}{0}{0}{1}{0.35}{0}
  \line{1}{0}{1}{1}{0.35}{1}
  \line{1}{0.35}{0}{1}{0.35}{1}
  \line{1}{1}{0}{1}{0.65}{0}
  \line{1}{1}{1}{1}{0.65}{1}
  \line{1}{0.65}{0}{1}{0.65}{1}
  \point{0}{1}{0}{white}
  \point{0}{1}{1}{white}
  \point{1}{1}{0}{white}
  \point{1}{1}{1}{white}
  \point{0}{0}{0}{white}
  \point{0}{0}{1}{white}
  \point{1}{0}{0}{white}
  \point{1}{0}{1}{white}}
  \quad \cdot \quad
  \partition[3d]{%
  \line{0}{0}{0}{0}{1}{0}
  \line{0}{1}{1}{1}{0}{1}
  \line{0}{0}{1}{0}{0.35}{1}
  \line{1}{0}{0}{1}{0.35}{0}
  \line{0}{0.35}{1}{1}{0.35}{0}
  \point{0}{1}{0}{white}
  \point{0}{1}{1}{white}
  \point{0}{0}{0}{white}
  \point{0}{0}{1}{white}
  \point{1}{0}{0}{white}
  \point{1}{0}{1}{white}}
  \quad = \quad 
  \partition[3d]{%
  \line{0}{0}{0}{0}{1}{0}
  \line{0}{0}{1}{0}{1}{1}
  \line{1}{0}{0}{1}{0.35}{0}
  \line{1}{0}{1}{1}{0.35}{1}
  \line{1}{0.35}{0}{1}{0.35}{1}
  \line{1}{1}{0}{1}{0.65}{0}
  \line{1}{1}{1}{1}{0.65}{1}
  \line{1}{0.65}{0}{1}{0.65}{1}
  \line{0}{1}{0}{0}{2}{0}
  \line{0}{2}{1}{1}{1}{1}
  \line{0}{1}{1}{0}{1.35}{1}
  \line{1}{1}{0}{1}{1.35}{0}
  \line{0}{1.35}{1}{1}{1.35}{0}
  \point{0}{1}{0}{white}
  \point{0}{1}{1}{white}
  \point{1}{1}{0}{white}
  \point{1}{1}{1}{white}
  \point{0}{0}{0}{white}
  \point{0}{0}{1}{white}
  \point{1}{0}{0}{white}
  \point{1}{0}{1}{white}
  \point{0}{2}{0}{white}
  \point{0}{2}{1}{white}
  }
  \quad = \quad 
  \partition[3d]{%
  \line{0}{0}{0}{0}{1}{0}
  \line{0}{0}{1}{0}{1}{1}
  \line{1}{0}{0}{1}{0.35}{0}
  \line{1}{0}{1}{1}{0.35}{1}
  \line{1}{0.35}{0}{1}{0.35}{1}
  \point{0}{1}{0}{white}
  \point{0}{1}{1}{white}
  \point{0}{0}{0}{white}
  \point{0}{0}{1}{white}
  \point{1}{0}{0}{white}
  \point{1}{0}{1}{white}
  }
\]
\end{enumerate}
Moreover, a category of spatial partitions is a subset $\CC \subseteq \PP^{(m)}$ that is closed under these operations and contains the spatial base partitions $\partId^{(m)} \in \PP^{(m)}(1,1)$ and $\partPair^{(m)} \in \PP^{(m)}(0,2)$.
Here, $p^{(m)} \in \PP^{(m)}$ denotes the spatial partition that is obtained by placing a copy of a partition $p \in \PP$ on each of the levels.
\end{remark}

\begin{remark}
While the classification of categories of partitions was completed in~\cite{raum16b}, it is still ongoing in the case of colored and spatial partitions. For more information on these cases, we refer for example to~\cite{gromada18, tarrago18}
and the recent work by Mang-Weber~\cite{mang21a,mang21b}
in the case colored partitions, and to~\cite{cebron16} in the case of spatial partition.
\end{remark}

\section{Algorithms and data-structures for partitions}\label{sec:algos-structs}
In the following, we present efficient algorithms and data-structures for partitions, their basic operators as well as their generalizations to colored and spatial partitions. 
Additionally, these algorithms and data-structures have been
implemented in the computer algebra system OSCAR~\cite{OSCAR}. We refer to \Cref{sec:oscar-impl} for a short demo.

\subsection{Partitions}

We begin by introducing a data-structure for partitions in the sense of \Cref{def:partition}.

\begin{definition}[Partition structure]\label{def:part-struct}
A partition $p \in \PP$ can be represented by a tuple $(k, \ell, b)$, where
$k$ denotes the number of upper points, $\ell$ denotes the number of lower points and 
$b \in \N^{k+\ell}$ encodes the block structure such that two points $1 \leq i, j \leq k+\ell$ are in the same block if and only if $b_i = b_j$.
\end{definition}

\begin{example}\label{ex:part-datastruct}
Consider for example the partition $p \in \PP$ represented by the tuple
\[
  \big(3, \, 4, \, (1, 1, 2, 1, 3, 3, 2)\big).
\]
Then $p$ corresponds to the following string diagram:
\[
  {\partition[2d]{
  \line{1}{0}{0}{1}{0.35}{0}
  \line{2}{0}{0}{2}{0.35}{0} 
  \line{1}{0.35}{0}{2}{0.35}{0}
  \line{0}{1}{0}{0}{0}{0}
  \line{1}{1}{0}{1}{0.65}{0}
  \line{0}{0.65}{0}{1}{0.65}{0}
  \line{2}{1}{0}{3}{0}{0}
  \point{0}{0}{0}{white}
  \point{1}{0}{0}{white}
  \point{2}{0}{0}{white}
  \point{3}{0}{0}{white}
  \point{0}{1}{0}{white}
  \point{1}{1}{0}{white}
  \point{2}{1}{0}{white}
  \label{0}{-0.35}{0}{\Huge 1}
  \label{1}{-0.35}{0}{\Huge 3}
  \label{2}{-0.35}{0}{\Huge 3}
  \label{3}{-0.35}{0}{\Huge 2}
  \label{0}{1.35}{0}{\Huge 1}
  \label{1}{1.35}{0}{\Huge 1}
  \label{2}{1.35}{0}{\Huge 2}
}}.
\]
\end{example}

In the following, we identify partitions $p \in \PP$ with a 
corresponding representation $(k, \ell, b)$. In particular, we write $p := (k, \ell, b) \in \PP$
and denote with $\abs{p} := k + \ell$ the \emph{size} of $p$.

Note that two different representations in the sense of \Cref{def:part-struct}
can correspond to the same partition as string diagram. However, this is exactly is the case if
both partitions are equivalent in the following sense.

\begin{definition}[Equivalence]
Let $p := (k, \ell, a), q := (m, n, b) \in \PP$ be two partitions. Then 
$p$ and $q$ are \emph{equivalent} if 
$k = m$, $\ell = n$ and there exists a bijection 
\[
  \phi \colon \{ a_1, \dots, a_{k+\ell}\} \to \{b_1, \dots, b_{k+\ell}\}
\]
such that $\phi(a_{i}) = b_{i}$ for all $1 \leq i \leq k + \ell$.
\end{definition}

While the representation of a partition is not unique, it is always possible to normalize its block structure
such that the following condition is satisfied.

\begin{definition}\label{def:part-normalized}
Let $p := (k, \ell, b) \in \PP$ be a partition. Then $p$ is \emph{normalized}
if $b_1 = 1$ and $a_j \leq a_i + 1$ for all $1 \leq i < j \leq k$.
\end{definition}

The following algorithm computes the normalization of a given partition
using a map data-structure. In particular, this allows us to efficiently compare partitions since
two partitions are equivalent if and only if their normalized versions agree.

\begin{algorithm}[!htb]%
\caption{Normalize $p$.}%
\label{algo:normalize}%
\begin{flushleft}
\hspace*{\algorithmicindent}\textbf{input:} $p = (k, \ell, b)$ \\
\hspace*{\algorithmicindent}\textbf{output:} equivalent partition satisfying \Cref{def:part-normalized}
\end{flushleft}
\begin{algorithmic}[1]
\State $m \gets 1$
\State $r \gets \text{empty map structure}$
\State $a \gets (b_1, \dots, b_{k + \ell})$
\For{$i = 1, \dots, k + \ell$}
  \If{\text{$r(a_i)$ not defined}}
    \State $r(a_i) \gets m$
    \State $m \gets m + 1$
  \EndIf
  \State $a_i \gets r(a_i)$
\EndFor
\State \Return $(k, \ell, a)$
\end{algorithmic}
\end{algorithm}

\begin{proposition}
Let $p \in \PP$ be a partition. Then \Cref{algo:normalize} computes an 
equivalent partition that is normalized in the sense of \Cref{def:part-normalized}
using $\OO(\abs{p})$ accesses to a map structure.
\end{proposition}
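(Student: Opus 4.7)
The plan is to verify three things about the output $(k,\ell,a)$: that it represents a partition equivalent to $p$, that it satisfies the normalization condition of \Cref{def:part-normalized}, and that the whole computation uses only $\OO(\abs{p})$ map accesses.

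For equivalence, I would track the map $r$. The counter $m$ is incremented exactly once per new key inserted into $r$, and each new key receives the current value of $m$ before the increment, so $r$ takes pairwise distinct values in $\{1, \dots, m-1\}$ after the loop. In other words, $r$ is injective from the set of original block labels $\{b_1, \dots, b_{k+\ell}\}$ into $\N$, hence a bijection onto its image. Since line 9 sets $a_i = r(b_i)$ at iteration $i$ (using the original $b_i$, as $a$ was initialized to a copy of $b$), we get $a_i = a_j \iff b_i = b_j$, which is exactly the required equivalence of $(k,\ell,a)$ and $(k,\ell,b)$.

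For normalization I would argue by a simple loop invariant: at the end of iteration $i$ we have $\{a_1,\dots,a_i\} = \{1,\dots,m-1\}$ and thus $\max(a_1,\dots,a_i) = m - 1$. Initially $m=1$ and the set is empty. At iteration $i$, either $r(b_i)$ is already defined, in which case $a_i \in \{1,\dots,m-1\}$ is inherited from an earlier position and neither the set $\{a_1,\dots,a_i\}$ nor $m$ changes; or $r(b_i)$ is undefined, in which case $a_i$ is set to the current $m$ and $m$ is then incremented, extending the set by one consecutive value. This invariant forces $a_1 = 1$ and, for any $j > i$, $a_j \leq \max(a_1, \dots, a_{j-1}) + 1$, which is the normalization condition (with the obvious reading over all $k+\ell$ positions, correcting the minor typographical inconsistency in \Cref{def:part-normalized}).

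For the complexity claim I would simply count: the initialization performs no map access, and the for loop runs $k + \ell = \abs{p}$ times, each iteration performing at most one membership test, one insertion, and one lookup on $r$, i.e.\ a constant number of map accesses. The total is therefore $\OO(\abs{p})$. The main obstacle, and really the only non-mechanical step, is pinning down the loop invariant for normalization; once that invariant is written down, both correctness and the counting argument are immediate.
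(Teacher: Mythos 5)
Your proof is correct and follows essentially the same route as the paper's: injectivity of $r$ gives equivalence, the order of first appearance of labels gives normalization, and a constant number of map accesses per loop iteration gives the $\OO(\abs{p})$ bound. Your version is in fact more careful — the explicit loop invariant $\{a_1,\dots,a_i\}=\{1,\dots,m-1\}$ makes the normalization claim precise, whereas the paper's one-line justification (and the inequality in \Cref{def:part-normalized} itself) is stated somewhat loosely.
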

\begin{proof}
The result is equivalent to $p$ since it has again $k$ upper points, $\ell$ lower points and its block structure
is related to $p$ via the bijection $r$. Moreover, the result is normalized because 
$r(a_1) = 1$ and $r(a_i) \geq r(a_j) + 1$ for all $1 \leq j < i \leq k + \ell$ by construction of the map $r$.
Finally, \Cref{algo:normalize} requires $2\cdot(k + \ell) = \OO(\abs{p})$ accesses to the map structure
in the worst case.
\end{proof}

\begin{remark}
By implementing the map structure in \Cref{algo:normalize} as balanced search tree, we
obtain a worst case time complexity of $\OO(n \log n)$ for $n := \abs{p}$. This can be
improved to an average complexity of $\OO(n)$ by using a hash map.
\end{remark}

\begin{example}
Consider the partition
\[
  p := \big(2, \, 3, \, (1, 3, 1, 5, 5, 3)\big) \in \PP.
\]
Then $p$ is not normalized since $b_2 = 3 > b_1 + 1$. However, 
\Cref{algo:normalize} computes the map 
\[
  r(1) = 1, \qquad r(3) = 2, \qquad r(5) = 3,
\]
which yields the normalized partition
\[
  \big(2, \, 3, \, (1, 2, 1, 3, 3, 2)\big).
\]
In particular, this shows that $p$ agrees with the partition from \Cref{ex:part-datastruct}.
\end{example}

\subsection{Operations on partitions}

Next, we consider the basic operations on partitions.
In the case of the tensor product and composition, we will need the following
auxiliary procedure to ensure that the blocks of the two input 
partitions $p := (k, \ell, a)$ and $q := (m, n, b)$ are \emph{disjoint}, i.e.\
\[
  \{ a_1, \dots, a_{k+\ell} \} \cap \{ b_1, \dots, b_{n+m} \} =  \emptyset.
\]

\begin{algorithm}[!htb]
\caption{Make blocks of $p$ disjoint from blocks of $q$}\label{algo:mkdisjoint}
\begin{flushleft}
\hspace*{\algorithmicindent} \textbf{input:} $p = (k, \ell, a)$, $q = (m, n, b)$ \\
\hspace*{\algorithmicindent} \textbf{output:} partition equivalent to $p$ and disjoint from $q$
\end{flushleft}
\begin{algorithmic}[1]
\State $t \gets 1 + \max \{ b_i \mid 1 \leq i \leq n + m \}$
\State $c \gets (a_1 + t, \dots, a_{k+\ell} + t)$
\State \Return $(k, \ell, c)$
\end{algorithmic}
\end{algorithm}

\begin{proposition}
Let $p, q \in \PP$ be two partitions. Then \Cref{algo:mkdisjoint} computes a 
partition equivalent to $p$ with blocks disjoint from $q$ in time $\OO(\abs{p} + \abs{q})$.
\end{proposition}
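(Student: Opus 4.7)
The plan is to verify the three assertions of the proposition in turn: equivalence of the output to $p$, disjointness of its blocks from those of $q$, and the linear time bound.

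First, I would observe that the map $\phi(x) := x + t$ is a bijection from $\{a_1, \dots, a_{k+\ell}\}$ onto $\{a_1 + t, \dots, a_{k+\ell} + t\} = \{c_1, \dots, c_{k+\ell}\}$, since translation by the constant $t$ is injective on $\N$. Because $\phi(a_i) = a_i + t = c_i$ for every $1 \leq i \leq k + \ell$, and because the output has the same $k$ upper and $\ell$ lower points as $p$, the definition of equivalence is immediately satisfied. This gives the first assertion.

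Next, to establish disjointness, I would use the choice $t = 1 + \max\{b_i \mid 1 \leq i \leq n + m\}$. For any index $i$, one has $c_i = a_i + t \geq 1 + t > \max\{b_j\}$, so $c_i \notin \{b_1, \dots, b_{n+m}\}$. Hence
\[
  \{c_1, \dots, c_{k+\ell}\} \cap \{b_1, \dots, b_{n+m}\} = \emptyset,
\]
which is exactly disjointness of the two block label sets.

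Finally, for the complexity bound, computing the maximum in line~1 requires a single pass over the $n + m = \abs{q}$ entries of $b$, and constructing the tuple $c$ in line~2 requires one addition per entry of $a$, i.e.\ $k + \ell = \abs{p}$ operations. The total cost is therefore $\OO(\abs{p} + \abs{q})$. No step in this argument is genuinely delicate; the only thing to be careful about is the edge case where $q$ has no points at all (so that the maximum in line~1 is over an empty set), in which case one conventionally sets $t := 1$, and the same argument for equivalence and disjointness still goes through.
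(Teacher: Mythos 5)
Your proof is correct and follows essentially the same route as the paper's: equivalence via the translation bijection $x \mapsto x + t$, disjointness from the choice of $t$ exceeding every $b_j$, and a single pass over each tuple for the time bound. The extra attention to the empty-maximum edge case is a welcome but inessential addition.
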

\begin{proof}
The result is equivalent to $p$ since it has the same number of upper and lower points
and their blocks are related by the bijection $x \mapsto x + t$.
Moreover, the resulting blocks are disjoint to $q$ since $c_i = a_i + t > a_i + b_j \geq b_j$ for all $1 \leq i \leq \abs{p}$
and $1 \leq j \leq \abs{q}$.
\end{proof}

In the following, we now assume that the input partitions to our
algorithms always have disjoint blocks. 
This allows us to immediately compute the involution
and tensor products of partitions by rearranging and merging 
the corresponding block structure.

\begin{algorithm}[!htb]
\caption{Involution $p^*$.}\label{algo:involution}
\begin{flushleft}
\hspace*{\algorithmicindent} \textbf{input:} $p = (k, \ell, b)$ \\
\hspace*{\algorithmicindent} \textbf{output:} $p^*$
\end{flushleft}
\begin{algorithmic}[1]
\State $a \gets (b_{k+1}, \dots, b_{k+\ell}, b_1, \dots, b_{k})$
\State \Return $(\ell, k, a)$
\end{algorithmic}
\end{algorithm}

\begin{algorithm}[!htb]
\caption{Tensor product $p \otimes q$.}\label{algo:tensor}
\begin{flushleft}
\hspace*{\algorithmicindent} \textbf{input:} $p = (k, \ell, a)$, $q = (m, n, b)$ \\
\hspace*{\algorithmicindent} \textbf{output:} $p \otimes q$
\end{flushleft}
\begin{algorithmic}[1]
\State $c \gets (a_1, \dots, a_k, b_1, \dots, b_m, a_{k+1}, \dots, a_{k+\ell}, b_{m+1}, \dots, b_{m+n})$
\State \Return $(k + m, \ell + n, c)$
\end{algorithmic}
\end{algorithm}

\begin{proposition}
Let $p, q \in \PP$ be partitions with disjoint blocks. Then
\begin{enumerate}
\item \Cref{algo:involution} computes 
the involution $p^*$ in time $\OO(\abs{p})$,
\item \Cref{algo:tensor} computes 
the tensor product $p \otimes q$ in time $\OO(\abs{p} + \abs{q})$.
\end{enumerate}
\end{proposition}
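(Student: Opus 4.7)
The plan is to verify both claims by unwinding the data-structure convention of \Cref{def:part-struct} (point $i$ is labeled by the $i$-th entry of the block tuple, with equal entries meaning same block) and checking that the tuples produced by the two algorithms encode exactly the diagrammatic constructions from \Cref{sec:op_partition}. Both algorithms only build a single new tuple of length equal to the size of the output, so the time bounds will fall out immediately once correctness is established.

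For the involution, I would observe that reflecting $p \in \PP(k,\ell)$ along the horizontal axis sends the $i$-th upper point (for $1 \le i \le k$) to the $i$-th lower point of $p^*$ and the $j$-th lower point of $p$ (for $1 \le j \le \ell$) to the $j$-th upper point of $p^*$, while preserving the block each point belongs to. In the tuple encoding, this means that the new entry at position $j$ (for $1 \le j \le \ell$) must equal $b_{k+j}$, and at position $\ell + i$ (for $1 \le i \le k$) must equal $b_i$. This is precisely the tuple $(b_{k+1}, \dots, b_{k+\ell}, b_1, \dots, b_k)$ paired with $(\ell, k)$, which \Cref{algo:involution} returns. Equality of two entries in this reordered tuple corresponds exactly to equality in the original, so the induced block decomposition is the same, and the construction manifestly runs in time $\OO(k+\ell) = \OO(\abs{p})$.

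For the tensor product, horizontal concatenation means that the combined partition has upper points indexed $1,\dots,k,k+1,\dots,k+m$ (first the $k$ upper points of $p$, then the $m$ upper points of $q$) and lower points indexed $k+m+1,\dots,k+m+\ell,k+m+\ell+1,\dots,k+m+\ell+n$. The block containing each point is the one inherited from $p$ or from $q$, respectively, with no new identifications. This is what the concatenated tuple $(a_1,\dots,a_k,b_1,\dots,b_m,a_{k+1},\dots,a_{k+\ell},b_{m+1},\dots,b_{m+n})$ produced by \Cref{algo:tensor} encodes, \emph{provided} that no label of $a$ coincides with a label of $b$ — otherwise two points which should lie in different blocks would be identified. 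The hypothesis that $p$ and $q$ have disjoint blocks, ensured in practice by first applying \Cref{algo:mkdisjoint}, rules this out. The cost of assembling a tuple of length $k+\ell+m+n$ is $\OO(\abs{p}+\abs{q})$.

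There is no real obstacle here: the only subtlety is the disjointness hypothesis for the tensor product, and that is exactly what was set up in the preceding discussion around \Cref{algo:mkdisjoint}. Both verifications reduce to matching the index arithmetic of the tuple encoding against the diagrammatic definition, and both algorithms perform only a bounded amount of work per output coordinate.
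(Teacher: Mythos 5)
Your proposal is correct and matches the paper's approach: the paper's own proof is simply ``Follows immediately,'' and your write-up supplies exactly the routine verification (matching the tuple index arithmetic against the diagrammatic definitions, with the disjointness hypothesis doing its work in the tensor product) that the paper leaves implicit. No gaps.
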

\begin{proof}
Follows immediately.
\end{proof}

Similarly, it is possible to implement rotations and the vertical reflection of a partition $p \in \PP$ in time 
$\OO(\abs{p})$ by rearranging its points accordingly.

Next, we come to the composition of two partitions. However, 
we first need to introduce an additional data-structure.

\begin{definition}[Union-find structure]
A \emph{union-find structure} $U$ is a data-structure representing 
an equivalence relation on $\N$ that supports the following
operations:
\begin{enumerate}
  \item $\mathit{union}_U(i, j)$: Merge the equivalence classes $[i]$ and $[j]$ for $i, j \in \N$.
  \item $\mathit{find}_U(i)$: Return a canonical representative of the class $[i]$ for $i \in \N$ such that
  \[
    [i] = [j] \ \Longleftrightarrow \ \mathit{find}_U(i) = \mathit{find}_U(j)
    \qquad \forall i, j \in \N.
  \]
\end{enumerate}
Initially, an empty union-find structure represents the equivalence relations given by the singletons $[i] := \{ i \}$ for all $i \in \N$.
\end{definition}

\begin{remark}\label{rem:uf-runtime}
A union-find structure can be efficiently implemented by a disjoint-set forest~\cite{corem09}, which has a time complexity of $\OO(n \log n)$
for $n$ subsequent union and find operations when performing path compression.
This can be further improved to $\OO(n \alpha(n))$ when additionally performing union by rank, where $\alpha$ denotes the inverse Ackerman function~\cite{tarjan84}.
\end{remark}

Using a union-find structure, we can now implement the composition
of two partitions. As before, we assume that both partitions have disjoint blocks.

\begin{algorithm}[!htb]
\caption{Composition $p \cdot q$.}\label{algo:composition}
\begin{flushleft}
\hspace*{\algorithmicindent} \textbf{input:} $p = (\ell, m, a)$, $q = (k, \ell, b)$ \\
\hspace*{\algorithmicindent} \textbf{output:} $p \cdot q$
\end{flushleft}
\begin{algorithmic}[1]
\State $U \gets \text{empty union-find structure}$
\For{$i = 1, \dots, \ell$}
    \State $\mathit{union}_U(a_{i}, b_{k+i})$
\EndFor
\State $c \gets (b_1, \dots, b_{k}, a_{\ell+1}, \dots, a_{\ell+m})$
\For{$i = 1, \dots, k + m$}
    \State $c_i \gets \mathit{find}_U(c_i)$
\EndFor
\State \Return $(k, m, c)$
\end{algorithmic}
\end{algorithm}

\begin{proposition}
Let $p, q \in \PP$ be partitions with disjoint blocks that are composable. Then \Cref{algo:composition} computes 
the composition $p \cdot q$ using $\OO(\abs{p} + \abs{q})$ $\mathit{union}$ and $\mathit{find}$ operations.
\end{proposition}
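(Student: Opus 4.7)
The plan is to establish correctness and then count the number of union-find operations. For correctness, I would first describe the blocks of $p \cdot q$ intrinsically: on the disjoint union of the label sets of $p$ and $q$, the blocks are the equivalence classes generated by (i) equality of labels of $a$ within $p$, (ii) equality of labels of $b$ within $q$, and (iii) the pairwise identifications $a_i \sim b_{k+i}$ for $1 \le i \le \ell$, after which only the points appearing in either the upper row of $q$ or the lower row of $p$ are retained. Purely intermediate components are discarded, matching the convention in \Cref{def: comp}.

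Next, I would verify that once the first loop of \Cref{algo:composition} has finished, the union-find structure $U$ encodes exactly this equivalence relation on the combined label set. Relations (i) and (ii) hold automatically in $U$ because identical labels lie in the same equivalence class by definition. The identifications (iii) are the only non-trivial part and are introduced by the explicit calls $\mathit{union}_U(a_i, b_{k+i})$ inside the loop. Here, the hypothesis that the blocks of $p$ and $q$ are disjoint is essential: otherwise an accidental collision between a label of $a$ and a label of $b$ would silently create additional identifications beyond (i)--(iii).

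Then I would examine the output. The vector $c$ lists, in order, the upper-point labels of $q$ followed by the lower-point labels of $p$; the second loop replaces each entry by its canonical representative $\mathit{find}_U(c_i)$. By the defining property of union-find structures, two entries of the updated $c$ coincide precisely when the corresponding points lie in the same equivalence class of $U$, so the returned tuple $(k, m, c)$ represents $p \cdot q$ in the sense of \Cref{def:part-struct}. Intermediate-only components never appear in $c$, so they are automatically dropped as required.

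For the complexity count I would simply tally operations: the first loop triggers $\ell$ calls to $\mathit{union}_U$ and the second loop triggers $k + m$ calls to $\mathit{find}_U$, giving $k + \ell + m = \OO(\abs{p} + \abs{q})$ in total. The main obstacle is really the second step, namely verifying that the disjointness assumption exactly prevents spurious identifications in $U$ so that the equivalence classes of $U$ match the intrinsic block structure of the composition; once this point is in place, the remaining steps are essentially formal.
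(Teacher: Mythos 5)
Your proposal is correct and follows essentially the same approach as the paper's proof: identify the intermediate labels via the $\ell$ calls to $\mathit{union}$, relabel the retained points via the $k+m$ calls to $\mathit{find}$, and note that disjointness of the blocks prevents spurious identifications. You simply spell out in more detail what the paper compresses into two sentences.
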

\begin{proof}
Since the blocks of $p$ and $q$ are disjoint, we can compute the composition by first identify the lower labels of $q$ with 
the corresponding upper labels of $q$ using $\mathit{union}$ operations and then relabel the upper points of $q$ and lower points of $p$ 
with new canonical labels using $\mathit{find}$.
The runtime of the algorithm is determined by the $\ell = \OO(\abs{p} + \abs{q})$ $\mathit{union}$ and 
the $k + m = \OO(\abs{p} + \abs{q})$ $\mathit{find}$ operations. 
\end{proof}

\begin{remark}
When implementing the union-find structure as disjoint-set forest 
using path-compression (see \Cref{rem:uf-runtime}), \Cref{algo:composition} has a worst case time complexity of $\OO(n \log n)$ for $n := \abs{p} + \abs{q}$.
Note that it is also possible to achieve a time complexity of $\OO(n)$ by first constructing a 
graph with the $\OO(n)$ vertices $\{a_1, \dots, a_{\ell+m}, b_1, \dots, b_{k+\ell}\}$ and
adding the $\OO(n)$ edges $(a_i, b_i)$ for $1 \leq i \leq \ell$ during the $\mathit{union}$ operations. Then, we can use a depth-first search to 
compute all connected components in $\OO(n)$ and select a canonical vertex per component as final label that is returned during $\mathit{find}$. 
\end{remark}

\subsection{Generalizations of partitions}

Next, we discuss how the previous algorithms
can be extended to colored and spatial partitions.
In the case of colored partitions, we can extend \Cref{def:part-struct}
by additionally storing strings of upper and lower colors.

\begin{definition}[Colored partition structure]\label{def:struct-colored-part}
A colored partition can be represented by 
a tuple $(x, y, p)$, where $x \in \{\circ,\bullet\}^k$
and $y \in \{\circ,\bullet\}^\ell$ are the upper and lower colors of an underlying partition $p \in \PP(k, \ell)$.
\end{definition}

Given a colored partition of this form, all previous algorithms can still be applied by restricting to the underlying partition $p$.
Only in the precondition of the composition, it no longer suffices to check the number of points, but we need to compare their complete color strings.

In the case of spatial partitions, we can use the bijection
\[
   \Phi_{k,\ell}^{(m)} \colon \{ 1, \ldots k + \ell \} \times \{1, \ldots m\} \to \{ 1, \ldots, m \cdot (k + \ell) \}
\]
that identifies a point $(i, j)$ with the point $m \cdot (i-1) + j$. It can be extended to a bijection on partitions 
\[
    \Phi_{k,\ell}^{(m)} \colon \PP^{(m)}(k, \ell)
    \to \PP^{(1)}({mk}, {m\ell})
\]
by rearranging the points accordingly, e.g.\
\[
    {\partition[3d]{
      \line{0}{0}{0}{0}{1}{0}
      \line{0}{0}{1}{0}{1}{2}
      \line{0}{0}{2}{0}{1}{1}
      \line{1}{0}{0}{1}{0.35}{0}
      \line{1}{0}{1}{1}{0.35}{1}
      \line{1}{0.35}{0}{1}{0.35}{1}
      \line{1}{0}{2}{1}{0.35}{2}
      \point{0}{0}{0}{white}
      \point{0}{0}{1}{white}
      \point{0}{0}{2}{white}
      \point{1}{0}{0}{white}
      \point{1}{0}{1}{white}
      \point{1}{0}{2}{white}
      \point{0}{1}{0}{white}
      \point{0}{1}{1}{white}
      \point{0}{1}{2}{white}
    }} 
    \, \in \PP^{(3)}(1, 2) 
    \quad \longrightarrow \quad 
    {\partition[2d]{
      \line{0}{0}{0}{0}{1}{0}
      \line{1}{0}{0}{2}{1}{0}
      \line{2}{0}{0}{1}{1}{0}
      \line{3}{0}{0}{3}{0.35}{0}
      \line{4}{0}{1}{4}{0.35}{0}
      \line{3}{0.35}{0}{4}{0.35}{0}
      \line{5}{0}{0}{5}{0.35}{0}
      \point{0}{0}{0}{white}
      \point{1}{0}{0}{white}
      \point{2}{0}{0}{white}
      \point{3}{0}{0}{white}
      \point{4}{0}{0}{white}
      \point{5}{0}{0}{white}
      \point{0}{1}{0}{white}
      \point{1}{1}{0}{white}
      \point{2}{1}{0}{white}
    }} 
    \, \in \PP^{(1)}(3, 6). 
\] 
Thus, we can reduce the implementation of spatial partitions to the case of classical partitions on one level.

\begin{definition}[Spatial partition structure]\label{def:struct-spatial-part}
  A spatial partition $p \in \PP^{(m)}(k, \ell)$ can be represented by 
  a pair $(m, q)$, where $m \in \N$ denotes the number of levels and $q := \phi^{(m)}_{k,l}(p) \in \PP$.
\end{definition}

Since $\phi^{(m)}_{k,l}$ is bijective, we can represent any spatial partition uniquely (up to equivalence) by the pair in the previous definition.
Moreover, $\phi^{(m)}_{k,l}$ is functorial in the sense that it respects the composition, tensor product and involution of spatial partitions. Thus, we can reduce all operations to the underlying partition $q$ after verifying that the number of levels $m$ agrees.
See also~\cite{faross24c} for further information on this bijection and a proof of its functoriality. 

\section{Undecidability of category problems}\label{sec:undeciability}

Consider a category of partitions $\CC$.
Then the following algorithmic problems appear naturally when studying $\CC$ or its corresponding quantum group:
\begin{enumerate}
  \item Is $p \in \CC$ for a partition $p \in \PP$?
  \item What is the number of partitions in $\CC(k, \ell)$?
\end{enumerate}

While these problems can be easily solved for the categories of partitions in \Cref{ex:categories}, it is not clear how to solve these problems in general if for example the category $\CC$ is given in terms of generators as in \Cref{ex:categories-gens}.
In this section, we show that there exists a recursively enumerable category of partitions $\CC$ for which the previous problems are undecidable, i.e.\ there 
exists no Turing machine that solves any of the previous problems for all possible input values.

Here, a category of partitions $\CC$ is called \emph{recursively enumerable} if there exists a Turing machine that enumerates all partitions $p \in \CC$. Note that if we drop this requirement, then the previous statement
follows directly from~\cite{raum16b} where it is shown that the set of all categories
of partitions is uncountable, while the set of all Turing machines is countable. However, in this case, it might not even be possible
to explicitly describe the category $\CC$ or enumerate its elements.   

\subsection{Varieties of groups}

Our proof of the initial statement relies on a result of Raum-Weber~\cite{raum16b} that allows us
to perform a reduction from the identity problem for varieties of groups, which was shown to be undecidable by Kleiman in~\cite{kleiman82}. Thus, we start by
introducing varieties of groups and their corresponding identity problem. See~\cite{neumann67} for 
more information on the general theory of varieties of groups and~\cite{kharlampovich95} for algorithmic problems in
this setting.

In the following, all groups are written multiplicative with $1$ as the identity element. Further, we denote by $\F_\infty$ the free group on the countably many generators $x_1, x_2, \dots$.
If $w \in \F_\infty$ is a word in the variables $x_1, \dots, x_n$, then $w$ is an \emph{identity} in a group $G$ if $w(g_1, \dots, g_n) = 1$ for all $g_1, \dots, g_n \in G$, where $w(g_1, \dots, g_n) \in G$ denotes the element obtained by substituting each generator $x_i$ with $g_i$.
Using this notation, we can now introduce varieties of groups.

\begin{definition}[Variety of groups]
Let $W \subseteq \F_\infty$. Then the \emph{variety of groups} $\VV(W)$
defined by $W$ is the class of all groups $G$ such that $w$ is an identity in $G$ for all $w \in W$.
\end{definition}

If $\VV$ is a variety of groups, then a word $w \in \F_\infty$ is an \emph{identity}
in $\VV$ if it is an identity in all $G \in \VV$. 
Moreover, we say that a variety of groups $\VV(W)$ is \emph{finitely based} if the set $W$ is finite. 
In this setting, Kleiman~\cite{kleiman82} showed that
the identity problem for varieties of groups is undecidable in the following sense.

\begin{proposition}\label{prop:ident-problem-varieties}
There exists a finitely based variety of groups $\VV(W)$
such that the problem of determining if a word $w \in \F_\infty$ is an identity in $\VV(W)$
is undecidable.
\end{proposition}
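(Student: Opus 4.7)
The plan is to reduce from a classical undecidable problem about finitely presented groups, namely the Novikov--Boone word problem. Concretely, I would fix a finitely presented group $G = \langle x_1, \dots, x_n \mid r_1, \dots, r_m \rangle$ with undecidable word problem and construct a finite set $W \subseteq \F_\infty$ such that deciding whether a word is an identity in $\VV(W)$ effectively encodes deciding whether a word represents $1$ in $G$. An algorithm for the former would then yield one for the latter, contradicting Novikov--Boone.

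The central obstacle is that the naive attempt $W = \{r_1, \dots, r_m\}$ fails: identities in $\VV(W)$ are universally quantified over all substitutions, so they are strictly stronger than mere relations between a fixed set of generators, and the relatively free group of $\VV(W)$ is typically very different from $G$. A different encoding is required. The standard route, due to Kleiman, is to engineer $W$ so that $G$ embeds into the relatively free group of $\VV(W)$ in a structure-preserving way, importing a Higman-style embedding of the Turing-machine computations defining the relations of $G$ into finitely many universally quantified laws. This is the delicate combinatorial-group-theoretic heart of the argument and the step I would expect to be most technical.

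Once such a $W$ is in hand, the reduction itself becomes routine: given a word $w$ in the generators of $G$, one effectively produces a word $\widetilde w \in \F_\infty$ with the property that $\widetilde w$ is an identity in $\VV(W)$ if and only if $w = 1$ in $G$. Since this proposition is used only as a black box in the proof of \Cref{thm:membership-undecidable}, I would not reproduce the construction of $W$ in full but instead cite~\cite{kleiman82} (see also~\cite{kharlampovich95} for the broader algorithmic context) and record here only the reduction strategy and the statement.
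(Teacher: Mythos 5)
Your proposal is correct and matches the paper's treatment: the paper likewise proves this proposition purely by citation to Kleiman~\cite{kleiman82} (with \cite{kharlampovich95} and \cite{aivazyan88} as supplementary references), using it as a black box exactly as you suggest. Your additional sketch of the reduction from the word problem for finitely presented groups is a reasonable gloss on Kleiman's strategy but is not part of the paper's argument, which records only the citation.
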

\begin{proof}
See~\cite{kleiman82} or the proof sketch in~\cite{kharlampovich95}.
An alternate proof can be found in~\cite{aivazyan88}. 
\end{proof}

In addition to the previous proposition, we need an additional fact from the theory of varieties of groups that establishes a bijection 
between variety of groups and fully invariant subgroups of $\F_\infty$. Here, 
a subgroup $H \subseteq G$ is \emph{fully invariant} if $\phi(H) \subseteq H$ 
for every endomorphism $\phi \in \End(G)$.

\begin{proposition}\label{prop:varieties-fully-inv}
There exists a bijection between varieties of groups $\VV$ and fully invariant subgroups of $H \subseteq \F_\infty$
such that
\[
    H = \{ w \in \F_\infty \mid \text{$w$ is an identity in $\VV$} \}.
\]
\end{proposition}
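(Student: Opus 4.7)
The plan is to exhibit the bijection explicitly and verify both directions. In one direction, send a variety $\VV$ to the set
\[
    I(\VV) := \{ w \in \F_\infty \mid \text{$w$ is an identity in $\VV$} \}.
\]
In the other direction, send a subgroup $H \subseteq \F_\infty$ to the variety $\VV(H)$ as defined in the paper. The statement in the proposition then corresponds exactly to the composition $\VV \mapsto I(\VV)$.

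First I would check that $I(\VV)$ is indeed a fully invariant subgroup of $\F_\infty$. The subgroup property is immediate: substituting group elements into $w_1 w_2^{-1}$ yields $1 \cdot 1^{-1} = 1$ whenever $w_1, w_2$ are identities. For full invariance, let $\phi \in \End(\F_\infty)$ and $w \in I(\VV)$. For any $G \in \VV$ and any $g_1, g_2, \ldots \in G$, the element $\phi(w)(g_1, g_2, \ldots)$ equals $w$ evaluated on the elements $\phi(x_i)(g_1, g_2, \ldots) \in G$, which is $1$ because $w$ is an identity in $G$. Hence $\phi(w) \in I(\VV)$.

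Next I would show the two assignments are mutually inverse. The inclusion $H \subseteq I(\VV(H))$ is automatic by definition of $\VV(H)$, and $\VV \subseteq \VV(I(\VV))$ is likewise immediate. For the reverse direction $\VV(I(\VV)) \subseteq \VV$, one uses that every variety is closed under subgroups, quotients and arbitrary direct products (which is a standard consequence of the definition via identities), together with the existence of a relatively free group of $\VV$ on countably many generators; any group in $\VV(I(\VV))$ satisfies all identities of this relatively free group and is therefore a quotient of a subgroup of a power of it, hence lies in $\VV$.

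The main obstacle is the reverse inclusion $I(\VV(H)) \subseteq H$ for a fully invariant subgroup $H$. The key idea is that $H$ being fully invariant is exactly what makes the quotient $F := \F_\infty / H$ lie in the variety $\VV(H)$: for any $w \in H$ and any choice of cosets $g_i = u_i H \in F$, the element $w(g_1, \ldots, g_n)$ equals $\phi(w) H$ for the endomorphism $\phi$ of $\F_\infty$ sending $x_i$ to $u_i$, and full invariance gives $\phi(w) \in H$, so $w(g_1, \ldots, g_n) = 1$ in $F$. Hence $F \in \VV(H)$. Now given any $w \in I(\VV(H))$, specializing the variables to their own images $x_i H \in F$ must yield the identity of $F$, i.e.\ $wH = H$, so $w \in H$. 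This closes the loop and establishes the bijection.
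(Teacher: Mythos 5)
Your proof is correct and complete. Note that the paper does not actually prove this proposition; it simply cites \cite[Theorem 14.31]{neumann67}, so you have supplied the standard self-contained argument behind that reference. All the essential points are in place: $I(\VV)$ is a fully invariant subgroup by the substitution argument; the inclusion $I(\VV(H)) \subseteq H$ is exactly where full invariance is used, via the observation that $\F_\infty/H$ lies in $\VV(H)$ and that evaluating $w$ at the images of the generators recovers $wH$. One small remark: for the inclusion $\VV(I(\VV)) \subseteq \VV$ you invoke closure under subgroups, quotients and products together with relatively free groups, which is heavier machinery than needed. Since a variety is by definition $\VV = \VV(W)$ for some set of words $W$, and $W \subseteq I(\VV)$ by construction, any group satisfying all identities in $I(\VV)$ a fortiori satisfies those in $W$ and hence lies in $\VV(W) = \VV$. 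This shortcut avoids any appeal to Birkhoff-type results and keeps the whole proof elementary.
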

\begin{proof}
See \cite[Theorem 14.31]{neumann67}.
\end{proof}

\begin{remark}\label{rem:var-rec-enum}
Consider a variety of groups $\VV(W)$.
Then~\cite[12.31]{neumann67} shows that the corresponding subgroup $H \subseteq \F_\infty$
in the previous proposition consists of all words that 
can be obtained by applying a finite number of the following operations to elements of $W$:
\begin{align*}
    v &\mapsto v^{-1}, \\ 
  (v, w) &\mapsto vw, \\
  v &\mapsto v(u_1, \dots, u_k) \quad \forall u_1, \dots, u_k \in \F_\infty.  
\end{align*}
In particular, this shows that if a variety of groups is finitely based, 
then the corresponding subgroup $H$ is recursively enumerable.
\end{remark}

\subsection{Hyperoctahedral categories of partitions}

In~\cite{raum16b}, Raum-Weber constructed an embedding of varieties of groups into hyperoctahedral categories of partitions, i.e.\ categories of partitions containing ${\partition[2d]{
  \line{2}{1}{0}{1}{0}{0}
  \line{0}{1}{0}{0}{0.65}{0}
  \line{1}{1}{0}{1}{0.65}{0}
  \line{0}{0.65}{0}{1}{0.65}{0}
  \line{2}{0}{0}{2}{0.35}{0}
  \line{3}{0}{0}{3}{0.35}{0}
  \line{2}{0.35}{0}{3}{0.35}{0}
  \line{1}{0.65}{0}{2}{0.35}{0}
  \point{0}{1}{0}{white}
  \point{1}{1}{0}{white}
  \point{2}{1}{0}{white}
  \point{1}{0}{0}{white}
  \point{2}{0}{0}{white}
  \point{3}{0}{0}{white}
}}$. It is based on a bijection between of these categories and so-called $sS_\infty$-invariant normal subgroups of $\Z_2^{\ast \infty}$.
Here, $\Z_2^{*\infty}$ denotes the countably infinite free product of $\Z_2$
with canonical generators $a_1, a_2, \dots$ in each factor. Moreover, a subgroup $H \subseteq \Z_2^{*\infty}$ is called \emph{$sS_\infty$-invariant} if 
it is invariant under all endomorphisms $\phi \in \End(\Z_2^{*\infty})$ of the form
$\phi(a_i) = a_{f(i)}$ for a function $f \colon \N \to \N$.

\begin{proposition}\label{prop:sSinf-norm-categories}
There exists a bijection between $sS_\infty$-invariant normal subgroups $H \subseteq \Z_2^{*\infty}$ and hyperoctahedral categories of partitions $\CC$ such that 
\[
  \CC(0, n) = \{ p(i_1, \dots, i_n) \mid i_1, \dots, i_n \in \N^n,\, a_{i_1} \cdots a_{i_n} \in H \}
\]
for all $n \in \N$, where $p(i_1, \dots, i_n)$ denotes the partition with $n$ lower points and two points $k, \ell \in \{1, \dots, n\}$ in 
the same block if and only if $i_k = i_\ell$.
\end{proposition}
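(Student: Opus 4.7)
The plan is to construct the bijection in both directions and verify it respects the requisite structures. For the forward direction, given an $sS_\infty$-invariant normal subgroup $H \subseteq \Z_2^{\ast\infty}$, I would define $\CC(0, n)$ by the formula in the statement and extend to all of $\PP$ via rotations, using that any category of partitions is automatically closed under rotations (so every element of $\CC(k, \ell)$ corresponds to a unique element of $\CC(0, k+\ell)$ after rotating all upper points down). The formula is well-defined on partitions because if $p(i_1, \dots, i_n) = p(j_1, \dots, j_n)$, then the equality pattern among the indices matches, so any function $f \colon \N \to \N$ with $f(i_k) = j_k$ induces an endomorphism of $\Z_2^{\ast\infty}$ carrying $a_{i_1} \cdots a_{i_n}$ to $a_{j_1} \cdots a_{j_n}$, and $sS_\infty$-invariance preserves membership in $H$.

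For the backward direction, given a hyperoctahedral category $\CC$, I would set
\[
  H(\CC) = \{ a_{i_1} \cdots a_{i_n} \mid p(i_1, \dots, i_n) \in \CC(0, n),\, n \geq 0 \}.
\]
Then $H(\CC)$ is $sS_\infty$-invariant because $p(i_1, \dots, i_n)$ depends only on the equality pattern of the indices, and normality follows from closure of $\CC$ under tensor products combined with the pair partitions $\partPair$, which encode cyclic conjugation of lower points. The subgroup axioms translate directly: the identity word corresponds to the empty partition, inversion corresponds to involution (combined with a rotation, using that every generator has order two), and multiplication corresponds to tensor product.

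The main obstacle is verifying closure of $\CC(H)$ under composition. After rotating to bring all points to the lower row, composition corresponds to identifying certain indices between the two underlying words and discarding disconnected components. Translated to $\Z_2^{\ast\infty}$, this amounts to substitution of generators (allowed by $sS_\infty$-invariance) followed by free reduction using the relations $a_i^2 = 1$ (accommodated because $H$ is normal, letting us absorb conjugating factors of the form $a_i w a_i$). The hyperoctahedral generator in $\CC$ then corresponds to a specific word in $\Z_2^{\ast\infty}$ that is automatically trivial, which is compatible with the fact that each generator has order two; conversely, the presence of this generator in $\CC$ is what prevents spurious identifications and matches the block-structure discipline required on the partition side.

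Finally, I would verify that $H \mapsto \CC(H)$ and $\CC \mapsto H(\CC)$ are mutually inverse, which follows essentially by construction once the structural properties above have been established: by definition every reduced word $a_{i_1} \cdots a_{i_n} \in H$ is witnessed by the partition $p(i_1, \dots, i_n)$, and every partition in $\CC(0, n)$ has the form $p(i_1, \dots, i_n)$ for some choice of indices. The detailed analysis of this correspondence, including the explicit matching of composition with the group operation, is carried out by Raum-Weber in~\cite{raum16b}.
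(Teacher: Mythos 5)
The paper proves this proposition by a bare citation to Raum--Weber \cite[Theorems 4.4 and 4.6]{raum16b}, and your proposal, which sketches that construction but ultimately defers the detailed verification to the same reference, is essentially the same approach. One caveat on the sketch itself: your justification that $H(\CC)$ is $sS_\infty$-invariant ``because $p(i_1,\dots,i_n)$ depends only on the equality pattern of the indices'' only handles injective substitutions $f\colon\N\to\N$; for non-injective $f$ the partition $p(f(i_1),\dots,f(i_n))$ is strictly coarser than $p(i_1,\dots,i_n)$, and showing that this coarser partition still lies in $\CC$ is precisely where the hyperoctahedral partition (block merging) must be invoked, not the equality-pattern argument.
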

\begin{proof}
See~\cite[Theorem 4.4 \& 4.6]{raum16b}.
\end{proof}

Now, consider the subgroup $E \subseteq \Z_2^{*\infty}$ of all words of even length given by 
\[
  E := \{ a_{i_1} \cdots a_{i_{2k}} \mid k \in \N, \, i_1, \dots, i_{2k} \in \N \}.
\]
Then one verifies that $E$ is isomorphic $\F_\infty$ via the map
\[
  \Phi \colon \F_\infty \to E, \quad x_n \mapsto a_1 a_{n+1} \quad \forall n \in \N.
\]
The following proposition now shows that fully invariant subgroups
of $E$ yield $sS_\infty$-invariant normal subgroups of $\Z_2^{*\infty}$
under this isomorphism. In particular, this proposition corrects the original version in~\cite{raum16b}, which does not hold in full generality.

\begin{proposition}\label{prop:sSinf-norm}
The subgroup $E \subseteq \Z_2^{*\infty}$ is $sS_\infty$-invariant and normal.
In particular, every fully invariant subgroup $H \subseteq E$ is 
$sS_\infty$-invariant and normal in $\Z_2^{*\infty}$.
\end{proposition}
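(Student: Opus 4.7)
The plan is to reduce both claims about $E$ to the observation that $E$ is the kernel of the parity homomorphism. Define $\epsilon \colon \Z_2^{*\infty} \to \Z_2$ by $\epsilon(a_i) = 1$ for every $i \in \N$; this is well-defined because $\Z_2$ is abelian and the defining relations $a_i^2 = 1$ are respected. Then $E = \ker \epsilon$, since a word $a_{i_1} \cdots a_{i_n}$ lies in $E$ if and only if $n$ is even, which is precisely when $\epsilon$ evaluates to $0$. As a kernel, $E$ is automatically normal. For $sS_\infty$-invariance, I would note that for every endomorphism $\phi$ of the form $\phi(a_i) = a_{f(i)}$ one has $\epsilon \circ \phi = \epsilon$ on generators, hence on all of $\Z_2^{*\infty}$; consequently $\phi(\ker \epsilon) \subseteq \ker \epsilon$, i.e.\ $\phi(E) \subseteq E$.

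For the "in particular" statement, the key idea is that both the $sS_\infty$-endomorphisms of $\Z_2^{*\infty}$ and the inner automorphisms of $\Z_2^{*\infty}$ restrict to endomorphisms of $E$, so full invariance of $H$ inside $E$ upgrades to the desired invariance inside $\Z_2^{*\infty}$. More concretely, let $H \subseteq E$ be fully invariant. Given any $sS_\infty$-endomorphism $\phi$ of $\Z_2^{*\infty}$, the first part shows $\phi(E) \subseteq E$, so $\phi|_E \in \End(E)$ and therefore $\phi(H) = \phi|_E(H) \subseteq H$ by full invariance. This gives the $sS_\infty$-invariance of $H$ in $\Z_2^{*\infty}$.

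For normality, let $g \in \Z_2^{*\infty}$ and consider conjugation $\psi_g \colon w \mapsto gwg^{-1}$. Since $E$ is normal in $\Z_2^{*\infty}$, the map $\psi_g$ restricts to a group homomorphism (in fact an automorphism) $\psi_g|_E \in \End(E)$. Full invariance of $H$ in $E$ then yields $gHg^{-1} = \psi_g|_E(H) \subseteq H$, so $H$ is normal in $\Z_2^{*\infty}$.

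I do not expect a serious obstacle here: the only thing to be a little careful about is that the parity map is actually well-defined on $\Z_2^{*\infty}$ (which follows because each relation $a_i^2 = 1$ has even length), and that "fully invariant" is interpreted correctly as invariance under all of $\End(E)$ so that both the restricted $sS_\infty$-endomorphisms and the restricted inner automorphisms qualify. Once these points are in place, the proof is just the two-line argument above.
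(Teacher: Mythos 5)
Your proof is correct, and its second half (upgrading full invariance of $H$ in $E$ to $sS_\infty$-invariance and normality in $\Z_2^{*\infty}$ by restricting the relevant endomorphisms to $E$) is exactly the paper's argument. The first half differs in packaging: the paper verifies directly on words that an $sS_\infty$-endomorphism maps $a_{i_1}\cdots a_{i_{2k}}$ to a word of even length and that $bab^{-1}$ has even length, whereas you realize $E$ as the kernel of the parity homomorphism $\epsilon\colon \Z_2^{*\infty}\to\Z_2$ and deduce normality and invariance from $\epsilon\circ\phi=\epsilon$. Your version is slightly more robust: the paper's word-counting implicitly relies on the fact that free reduction (cancelling $a_ia_i$) preserves the parity of word length, a point it does not address, while the kernel formulation makes this a non-issue since $\epsilon$ is defined on the group itself rather than on unreduced words. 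Both routes are short and correct; yours trades a small amount of setup (checking $\epsilon$ respects the relations $a_i^2=1$) for not having to reason about reduced forms at all.
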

\begin{proof}
Let $a = a_{i_1} \dots a_{i_{2k}} \in E$ and $\phi \in \End(\Z_2^{*\infty})$
of the form $\phi(a_i) = a_{f(i_1)} \dots a_{f(i)}$ for $f \colon \N \to \N$. Then 
$\phi(a) = a_{f(i_1)} \dots a_{f(i_{2k})} \in E$ such that $E$ is $sS_\infty$-invariant.
Similarly, let $\phi \in \End(\Z_2^{*\infty})$ be an inner automorphism of the form $\phi(x) = b x b^{-1}$ for $b \in \Z_2^{*\infty}$.
Then $\phi(a) = bab^{-1} \in E$ since both $a$ and $bb^{-1}$ have even length.
Since a subgroup is normal if and only if it is invariant under inner automorphism, it follows that 
$E$ is a normal subgroup. 

In particular, this shows that any of the previous endomorphisms restricts to 
an endomorphism $\phi|_E \in \End(E)$. Thus, any fully invariant subgroup $H \subseteq E$ 
is also invariant under these endomorphisms and therefore a $sS_\infty$-invariant and normal subgroup of $\Z_2^{*\infty}$.
\end{proof}

Thus, by combining the previous proposition with 
\Cref{prop:varieties-fully-inv} and \Cref{prop:sSinf-norm-categories}, we obtain an embedding of varieties
of groups into the set of category of partitions.

In particular, we can associate to each word $w \in \F_\infty$ a partition $p(w)$
by first applying isomorphism $\F_\infty \cong E$ to obtain
an element $a_{i_1} \dots a_{i_n} \in E$ and then constructing the partition
$p(i_1, \dots, i_n)$ as in \Cref{prop:sSinf-norm-categories}.

\begin{example}
Consider the word $w := x_{1} x_{2} x_{1}^{-1} x_{3}^{-1} x_{2} x_{3} \in \F_\infty$ that corresponds to 
the element
\[
     a_1 a_2 a_1 a_3 a_2 a_1 a_4 a_1 a_1 a_3 a_1 a_4 \in \Z_2^{*\infty}
\]
under the isomorphism $\F_\infty \cong E$. 
Then the corresponding partition $p(w)$ is given by 
\[
  p(w) \ = \
  {\partition[2d]{
  \line{0}{0}{0}{0}{0.5}{0}
  \line{1}{0}{0}{1}{1}{0}
  \line{2}{0}{0}{2}{0.5}{0}
  \line{3}{0}{0}{3}{1.5}{0}
  \line{4}{0}{0}{4}{1}{0}
  \line{5}{0}{0}{5}{0.5}{0}
  \line{6}{0}{0}{6}{1}{0}
  \line{7}{0}{0}{7}{0.5}{0}
  \line{8}{0}{0}{8}{0.5}{0}
  \line{9}{0}{0}{9}{1.5}{0}
  \line{10}{0}{0}{10}{0.5}{0}
  \line{11}{0}{0}{11}{1}{0}
  \line{0}{0.5}{0}{10}{0.5}{0}
  \line{1}{1}{0}{4}{1}{0}
  \line{6}{1}{0}{11}{1}{0}
  \line{3}{1.5}{0}{9}{1.5}{0}
  \point{0}{0}{0}{white}
  \point{1}{0}{0}{white}
  \point{2}{0}{0}{white}
  \point{3}{0}{0}{white}
  \point{4}{0}{0}{white}
  \point{5}{0}{0}{white}
  \point{6}{0}{0}{white}
  \point{7}{0}{0}{white}
  \point{8}{0}{0}{white}
  \point{9}{0}{0}{white}
  \point{10}{0}{0}{white}
  \point{11}{0}{0}{white}
  \label{0}{-0.35}{0}{\Huge 1}
  \label{1}{-0.35}{0}{\Huge 2}
  \label{2}{-0.35}{0}{\Huge 1}
  \label{3}{-0.35}{0}{\Huge 3}
  \label{4}{-0.35}{0}{\Huge 2}
  \label{5}{-0.35}{0}{\Huge 1}
  \label{6}{-0.35}{0}{\Huge 4}
  \label{7}{-0.35}{0}{\Huge 1}
  \label{8}{-0.35}{0}{\Huge 1}
  \label{9}{-0.35}{0}{\Huge 3}
  \label{10}{-0.35}{0}{\Huge 1}
  \label{11}{-0.35}{0}{\Huge 4}
}}.
\]
\end{example}

\subsection{Proof of the main theorem}

Using the previous embedding of varieties of groups into categories of partitions, 
we can now reduce the identity problem for varieties to the membership
problem for categories of partitions.

\begin{lemma}\label{lem:embedd-varieties}
Let $\VV$ be a finitely based variety of groups. Then there 
exists a recursively enumerable category of partitions $\CC$ such that 
\[
    \text{$w$ is an identity in $\VV$}
    \quad 
    \Longleftrightarrow
    \quad 
    p(w) \in \CC
    \qquad 
    \forall w \in \F_\infty.
\]
\end{lemma}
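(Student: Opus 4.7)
The plan is to chain together the correspondences from the preceding subsections. First, associate to $\VV$ its fully invariant subgroup $H_\VV \subseteq \F_\infty$ of identities via \Cref{prop:varieties-fully-inv}. Second, transport $H_\VV$ along the isomorphism $\Phi\colon \F_\infty \to E$ to obtain $H := \Phi(H_\VV) \subseteq E$. Since $\Phi$ is a group isomorphism, every endomorphism of $E$ pulls back to an endomorphism of $\F_\infty$ (and vice versa), so $H$ remains fully invariant in $E$. By \Cref{prop:sSinf-norm}, $H$ is then an $sS_\infty$-invariant normal subgroup of $\Z_2^{*\infty}$, and \Cref{prop:sSinf-norm-categories} produces a hyperoctahedral category of partitions $\CC$ with
\[
  \CC(0, n) = \{ p(i_1, \dots, i_n) \mid a_{i_1} \cdots a_{i_n} \in H \}
\]
for every $n \in \N$.

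The desired equivalence then follows by unravelling definitions. Given $w \in \F_\infty$, write $\Phi(w) = a_{i_1} \cdots a_{i_n} \in E$; by construction, $p(w) = p(i_1, \dots, i_n)$ lies in $\PP(0, n)$, so $p(w) \in \CC$ is equivalent to $p(w) \in \CC(0, n)$, which is equivalent to $\Phi(w) \in H$, which is equivalent to $w \in H_\VV$, and finally, via \Cref{prop:varieties-fully-inv}, to $w$ being an identity in $\VV$.

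The main remaining task, and the step where I expect the most care to be required, is verifying that $\CC$ is recursively enumerable. Since $\VV$ is finitely based, \Cref{rem:var-rec-enum} provides a Turing enumeration of $H_\VV$, and applying the computable isomorphism $\Phi$ term by term then enumerates $H$. For each enumerated element $a_{i_1} \cdots a_{i_n} \in H$ we output $p(i_1, \dots, i_n)$, thereby enumerating $\bigcup_{n \in \N} \CC(0, n)$. To recover partitions in $\CC(k, \ell)$ for $k > 0$, I use that every category of partitions is closed under rotations and that any $p \in \CC(k, \ell)$ is obtained by $k$ successive top-rotations from the partition in $\CC(0, k + \ell)$ with the same underlying block structure. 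Thus additionally iterating over all finite rotation sequences applied to the already-enumerated partitions yields a Turing machine enumerating all of $\CC$, finishing the proof.
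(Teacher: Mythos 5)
Your construction of $\CC$ and your verification of the equivalence ``$w$ is an identity in $\VV$ $\Leftrightarrow$ $p(w)\in\CC$'' follow exactly the paper's route (chaining \Cref{prop:varieties-fully-inv}, the isomorphism $\Phi\colon\F_\infty\to E$, \Cref{prop:sSinf-norm} and \Cref{prop:sSinf-norm-categories}), and that part is fine at the same level of detail as the paper. The problem is in the recursive enumerability argument, where your enumeration of $\bigcup_n\CC(0,n)$ is incomplete.

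By \Cref{prop:sSinf-norm-categories}, $\CC(0,n)$ consists of $p(i_1,\dots,i_n)$ for \emph{every} tuple $(i_1,\dots,i_n)$ whose product $a_{i_1}\cdots a_{i_n}$ lies in $H$ \emph{as a group element} of $\Z_2^{*\infty}$ — including all unreduced spellings. A single element $h\in H$ therefore contributes infinitely many distinct partitions, one for each way of writing $h$ as a (not necessarily reduced) word. Your procedure outputs only the tuples that arise as $\Phi(w)$ for $w$ in the enumeration of $H_\VV$, i.e.\ concatenations of the blocks $a_1a_{m}$ and $a_{m}a_1$ with $m\geq 2$. This misses almost everything: for instance $a_1a_1=1\in H$, so the tuple $(1,1)$ gives $p(1,1)=\partPair\in\CC(0,2)$, but no concatenation of blocks of the above form ever produces two adjacent equal indices, so $\partPair$ — a base partition that must lie in every category — is never output. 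More generally, all partitions coming from spellings of elements of $H$ other than the specific $\Phi$-images are lost. The fix, which is what the paper does, is to dovetail over \emph{all} finite tuples $(i_1,\dots,i_n)$ and all $w$ in the enumeration of $H$, outputting $p(i_1,\dots,i_n)$ whenever the product $a_{i_1}\cdots a_{i_n}$ equals $a(w)$ in $\Z_2^{*\infty}$; this equality check is decidable (the word problem in $\Z_2^{*\infty}$ is solvable by reduction), and interleaving these searches enumerates all of $\bigcup_n\CC(0,n)$. Your final step — recovering $\CC(k,\ell)$ from $\CC(0,k+\ell)$ by rotations — is correct and matches the paper.
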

\begin{proof}
By \Cref{prop:varieties-fully-inv}, $\VV$ corresponds to a fully invariant subgroup $H \subseteq \F_\infty$ that 
can be identified with a $sS_\infty$-invariant normal subgroup of $\Z_2^{*\infty}$ using the isomorphism $\F_\infty \cong E$
and \Cref{prop:sSinf-norm}. Moreover, this subgroup corresponds to a category of partitions $\CC$ by \Cref{prop:sSinf-norm-categories}.
Following this identification, every identity $w \in H$ yields a partition $p(w) \in \CC$.
Conversely, if $p(w) \in \CC$ for $w \in \F_\infty$, then $w \in H$ since the mapping in \Cref{prop:sSinf-norm-categories} is bijective.

It remains to show that the category $\CC$ is recursively enumerable. 
Let $w \in H$ and denote with $a(w)$ the correspond element in $E$. Then, the sets
\[
  \{ (i_1, \dots, i_n) \mid n \in \N, \, i_1, \dots, i_n \in \N^n,\, a_{i_1} \cdots a_{i_n} = a(w) \}
\]
can be enumerated by generating all words $a_{i_1} \cdots a_{i_n}$ and checking the equality.
Since the set $H$ is recursively enumerable by \Cref{rem:var-rec-enum}, we can interleave the elements of the previous sets such that
their union and the set $\bigcup_{n \in \N} \CC(0, n)$ are also recursively enumerable.
Because partitions in $\CC(k, \ell)$ are exactly rotate version of partitions in 
$\CC(0, k+\ell)$, it follows that the category $\CC$ is recursively enumerable.
\end{proof}

Using the previous lemma, we can finally transfer the undecidability of the identity problem 
for varieties of groups to the membership problem in categories of partitions.

\begin{theorem}\label{thm:membership-undecidable}
There exists a recursively enumerable category of partitions $\CC$ 
such that the problem of determining if $p \in \CC$ for a partition $p \in \PP$ is undecidable.
\end{theorem}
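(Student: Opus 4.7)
The plan is to carry out a one-line reduction from the identity problem for varieties of groups to the membership problem for categories of partitions, using the machinery already established in \Cref{lem:embedd-varieties}. First I would invoke \Cref{prop:ident-problem-varieties} to fix a concrete finitely based variety of groups $\VV(W)$ for which it is undecidable whether an input word $w \in \F_\infty$ is an identity. Applying \Cref{lem:embedd-varieties} to this $\VV(W)$ then produces a recursively enumerable category of partitions $\CC$ satisfying
\[
    w \text{ is an identity in } \VV(W)
    \quad \Longleftrightarrow \quad
    p(w) \in \CC,
    \qquad \forall w \in \F_\infty.
\]

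Next I would verify that the map $w \mapsto p(w)$ is computable. Given a word $w \in \F_\infty$ on finitely many generators, one applies the isomorphism $\F_\infty \cong E$ algorithmically (replacing each $x_n^{\pm 1}$ by the corresponding product in the $a_i$) to obtain an index sequence $(i_1, \dots, i_n)$, and then writes down the block structure of $p(i_1, \dots, i_n) \in \PP(0,n)$ according to the equality relation on indices as in \Cref{prop:sSinf-norm-categories}. This is a routine string manipulation, so it can be performed by a Turing machine.

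The contradiction step is then immediate. Suppose for a contradiction that there were a Turing machine $M$ deciding $p \in \CC$ for arbitrary $p \in \PP$. Given any input word $w \in \F_\infty$, we could first compute $p(w)$ and then feed it into $M$; by the equivalence above, the output would decide whether $w$ is an identity in $\VV(W)$. This yields a decision procedure for the identity problem in $\VV(W)$, contradicting \Cref{prop:ident-problem-varieties}. Hence no such $M$ exists, and membership in $\CC$ is undecidable, while $\CC$ is recursively enumerable by construction.

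The main obstacle has already been overcome in \Cref{lem:embedd-varieties} and the propositions preceding it, where the delicate translation between fully invariant subgroups of $\F_\infty$, $sS_\infty$-invariant normal subgroups of $\Z_2^{\ast\infty}$, and hyperoctahedral categories of partitions was set up, together with the verification that finite basedness is preserved as recursive enumerability of the resulting category. Given that apparatus, the proof of the theorem reduces to the reduction sketched above.
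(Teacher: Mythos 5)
Your proposal is correct and follows essentially the same reduction as the paper: fix the variety from \Cref{prop:ident-problem-varieties}, apply \Cref{lem:embedd-varieties}, and use the computability of $w \mapsto p(w)$ to turn a membership decider for $\CC$ into a decider for the identity problem. Your explicit verification that $w \mapsto p(w)$ is computable is a detail the paper leaves implicit, but the argument is the same.
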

\begin{proof}
Let $\VV$ be the variety of group from \Cref{prop:ident-problem-varieties} for which the word problem
is undecidable and let $\CC$ be the category of partitions obtained from \Cref{lem:embedd-varieties}.
Then any Turing machine deciding if $p \in \CC$ for arbitrary $p \in \PP$ could be immediately used to 
decide the identity problem for $\VV$. Thus, the membership problem for $\CC$ also has to be undecidable.
\end{proof}

Additionally, we obtain as an immediate consequence that the problem of counting
partitions of a given size is undecidable too.

\begin{corollary}\label{corr:counting-undecidable}
There exists a recursively enumerable category of partitions $\CC$ 
such that there exists no Turing machine that computes $\abs{\CC(k, \ell)}$
for all $k, \ell \in \N$.
\end{corollary}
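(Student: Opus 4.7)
The plan is to take the same recursively enumerable category $\CC$ produced by \Cref{thm:membership-undecidable} and show that if counting $\abs{\CC(k,\ell)}$ were computable, then so would be membership, contradicting that theorem. The core observation is that $\PP(k,\ell)$ is finite for every $k,\ell \in \N$ (with cardinality the Bell number $B_{k+\ell}$), so $\CC(k,\ell)$ is also finite and in principle can be completely enumerated once its cardinality is known.

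Concretely, suppose for contradiction that a Turing machine $M$ exists computing $N(k,\ell) := \abs{\CC(k,\ell)}$ for all $k,\ell \in \N$. I would build a Turing machine $M'$ deciding membership in $\CC$ as follows: on input $p \in \PP(k,\ell)$, first run $M$ to obtain $N := N(k,\ell)$; then use a Turing machine enumerating $\CC$ (which exists because $\CC$ is recursively enumerable) and filter its output to those partitions with exactly $k$ upper and $\ell$ lower points, halting as soon as $N$ distinct such partitions have been produced. At that point the collected set equals $\CC(k,\ell)$ in its entirety, and $M'$ outputs whether $p$ is among them, using the normalization procedure from \Cref{algo:normalize} to check equality of partitions up to equivalence.

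This reduction terminates because the enumeration of $\CC$ is guaranteed to eventually produce all $N$ elements of $\CC(k,\ell)$, and correctness is immediate from finiteness of $\PP(k,\ell)$ together with the definition of recursive enumerability. Hence the existence of $M$ contradicts \Cref{thm:membership-undecidable}, so no such $M$ can exist for this $\CC$. I do not expect any real obstacle here; the only subtle points are ensuring that counting of ``distinct'' enumerated partitions is done modulo equivalence (handled via normalization) and that the bound $N$ is finite so the loop genuinely halts, both of which are straightforward.
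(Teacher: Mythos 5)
Your proposal is correct and follows essentially the same reduction as the paper: use the hypothetical counting machine to obtain $m := \abs{\CC(k,\ell)}$, enumerate $\CC$ until $m$ distinct partitions with $k$ upper and $\ell$ lower points appear, and then test membership of $p$ in that finite list, contradicting \Cref{thm:membership-undecidable}. Your added remark about comparing partitions up to equivalence via normalization is a sensible implementation detail but does not change the argument.
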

\begin{proof}
Let $\CC$ be the category of partitions from \Cref{thm:membership-undecidable}. Assume that there exists a Turing machine $T$ that computes $\abs{\CC(\cdot, \cdot)}$
and that we are given a partition $p \in \PP$. We show that $T$ allows us to determine
if $p \in \CC$, contradicting \Cref{thm:membership-undecidable}.
First, we use $T$ to compute the number of partitions $m := \abs{\CC(k, \ell)}$, where 
$k, \ell \in \N$ are the number of upper and lower points of $p$. Next, we start enumerating
all possible partitions in $\CC$ until we find $m$ distinct partitions $p_1, \dots, p_m \in \CC(k, \ell)$ after a finite time. 
Then, we stop the enumeration and can decide if $p \in \{p_1, \dots, p_m\} = \CC(k, \ell)$.
\end{proof}
 
\section{Conclusion and open questions}\label{sec:conclusion}

In this paper, we presented efficient algorithms and data-structures that allow us to perform any category operation on partitions in quasi-linear time, which is near optimal since for all operations the output size already grows linear in the input size. Moreover, we provide a concrete implementation of these algorithms in the computer algebra system OSCAR, which enables the study of categories of partitions and their corresponding quantum groups via computational methods in future work.

In the second part, we constructed a category of partitions $\CC$ for which the natural problems of deciding the membership of a partition and counting partitions of a given size are algorithmically undecidable.
While the category $\CC$ is constructed from a finitely based variety of groups, we were only able to show that the resulting category is recursively enumerable. Thus, it remains open if $\CC$ is again finitely generated. Using the previous approach, one would have to show that if $H \subseteq \Z_2^{*\infty}$
is finitely generated as fully invariant subgroup of $E$, then $H$ is also finitely generated 
as $sS_\infty$-invariant normal subgroup of $\Z_2^{*\infty}$.

\bibliographystyle{amsplain}
\bibliography{article}

\providecommand{\bysame}{\leavevmode\hbox to3em{\hrulefill}\thinspace}
\providecommand{\MR}{\relax\ifhmode\unskip\space\fi MR }
\providecommand{\MRhref}[2]{%
  \href{http://www.ams.org/mathscinet-getitem?mr=#1}{#2}
}
\providecommand{\href}[2]{#2}
\begin{thebibliography}{10}

\bibitem{abramsky07}
S.~Abramsky, \emph{{T}emperley–{L}ieb algebra: from knot theory to logic and computation via quantum mechanics}, Mathematics of Quantum Computation and Quantum Technology (L.~Kauffman and S.~Lomonaco, eds.), Chapman and Hall/CRC, London, 2007.

\bibitem{aivazyan88}
S.~Aivazyan, \emph{A problem of {A}. {I}. {M}al'tsev}, Siberian Mathematical Journal \textbf{29} (1988), 877--884.

\bibitem{banica10a}
T.~Banica, S.~Curran, and R.~Speicher, \emph{Classification results for easy quantum groups}, Pacific Journal of Mathematics \textbf{247} (2010), no.~1, 1--26.

\bibitem{banica09}
T.~Banica and R.~Speicher, \emph{Liberation of orthogonal {L}ie groups}, Advances in Mathematics \textbf{222} (2009), no.~4, 1461--1501.

\bibitem{brauer37}
R.~Brauer, \emph{On algebras which are connected with the semisimple continuous groups}, Annals of Mathematics \textbf{38} (1937), no.~4, 857--872.

\bibitem{corey23}
D.~Corey, M.~Joswig, J.~Schanz, M.~Wack, and M.~Weber, \emph{Quantum automorphisms of matroids}, Journal of Algebra \textbf{667} (2025), 480--507.

\bibitem{corem09}
T.~Cormen, C.~Leiserson, R.~Rivest, and C.~Stein, \emph{Introduction to algorithms, third edition}, MIT Press, 2009.

\bibitem{cebron16}
G.~Cébron and M.~Weber, \emph{Quantum groups based on spatial partitions}, Annales de la Faculté des sciences de Toulouse \textbf{32} (2023), no.~4, 727--768.

\bibitem{faross24c}
N.~Faroß, \emph{Projective versions of spatial partition quantum groups}, Preprint at arXiv:2411.04012, 2024.

\bibitem{faross24a}
N.~Faroß and M.~Weber, \emph{A concrete model for the quantum permutation group on 4 points}, Experimental Mathematics (2024), 1--14.

\bibitem{freslon23b}
A.~Freslon, \emph{Compact matrix quantum groups and their combinatorics}, Cambridge University Press, Cambridge, 2023.

\bibitem{fulton04}
W.~Fulton and J.~Harris, \emph{Representation theory}, Springer, New York, 2004.

\bibitem{gromada18}
D.~Gromada, \emph{Classification of globally colorized categories of partitions}, Infinite Dimensional Analysis, Quantum Probability and Related Topics \textbf{21} (2018), no.~4, 1850029.

\bibitem{kharlampovich95}
O.~Kharlampovich and M.~Sapir, \emph{Algorithmic problems in varieties}, International Journal of Algebra and Computation \textbf{5} (1995), no.~4--5, 379--602.

\bibitem{kleiman82}
Y.~Kle{\v\i}man, \emph{On identities in groups}, Trudy Moskovskogo Matematicheskogo Obshchestva \textbf{44} (1982), 62--108.

\bibitem{levandovskyy22}
V.~Levandovskyy, C.~Eder, A.~Steenpass, S.~Schmidt, J.~Schanz, and M.~Weber, \emph{Existence of quantum symmetries for graphs on up to seven vertices: a computer based approach}, Proceedings of the 2022 International Symposium on Symbolic and Algebraic Computation, 2022, pp.~311--318.

\bibitem{mang21a}
A.~Mang and M.~Weber, \emph{Non-hyperoctahedral categories of two-colored partitions part {I}: new categories}, Journal of Algebraic Combinatorics \textbf{54} (2021), 475–513.

\bibitem{mang21b}
\bysame, \emph{Non-hyperoctahedral categories of two-colored partitions part {II}: all possible parameter values}, Applied Categorical Structures \textbf{29} (2021), 951–982.

\bibitem{neumann67}
H.~Neumann, \emph{Varieties of groups}, Springer, Berlin, Heidelberg, 1967.

\bibitem{OSCAR}
\emph{{OSCAR} -- {O}pen {S}ource {C}omputer {A}lgebra {R}esearch system, version 1.3.0-dev}, \url{https://www.oscar-system.org}, 2024.

\bibitem{tarjan84}
J.~van~Leeuwen R.~Tarjan, \emph{Worst-case analysis of set union algorithms}, Journal of the ACM \textbf{31} (19985), no.~2, 245--281.

\bibitem{raum16b}
S.~Raum and M.~Weber, \emph{Easy quantum groups and quantum subgroups of a semi-direct product quantum group}, Journal of Noncommutative Geometry \textbf{9} (2016), no.~4, 1261--1293.

\bibitem{raum16a}
\bysame, \emph{The full classification of orthogonal easy quantum groups}, Communications in Mathematical Physics \textbf{341} (2016), 751--779.

\bibitem{tarrago17}
P.~Tarrago and M.~Weber, \emph{Unitary easy quantum groups: the free case and the group case}, International Mathematics Research Notices \textbf{2017} (2017), no.~18, 5710–5750.

\bibitem{tarrago18}
\bysame, \emph{The classification of tensor categories of two-colored noncrossing partitions}, Journal of Combinatorial Theory, Series A \textbf{154} (2018), 464--506.

\bibitem{temperly71}
N.~Temperley and E.~Lieb, \emph{Relations between the ‘percolation’ and ‘colouring’ problem and other graph-theoretical problems associated with regular planar lattices: some exact results for the ‘percolation’ problem}, Proceedings of the Royal Society A \textbf{322} (1971), 251--280.

\bibitem{sagemath}
{The Sage Developers}, \emph{{S}agemath, the {S}age {M}athematics {S}oftware {S}ystem ({V}ersion 9.7)}, 2025, {\tt https://www.sagemath.org}.

\bibitem{timmermann08}
T.~Timmermann, \emph{An invitation to quantum groups and duality: From {H}opf algebras to multiplicative unitaries and beyond}, EMS textbooks in mathematics, EMS Press, 2008.

\bibitem{wang98}
S.~Wang, \emph{Quantum symmetry groups of finite spaces}, Communications in Mathematical Physics \textbf{195} (1998), 195--211.

\bibitem{weber13}
M.~Weber, \emph{On the classification of easy quantum groups}, Advances in Mathematics \textbf{245} (2013), 500--533.

\bibitem{weber17}
\bysame, \emph{Introduction to compact (matrix) quantum groups and {B}anica-{S}peicher (easy) quantum groups}, Proceedings -- Mathematical Sciences \textbf{127} (2017), 881--933.

\bibitem{woronowicz88}
S.~Woronowicz, \emph{{T}annaka-{K}rein duality for compact matrix pseudogroups. {T}wisted {SU(N)} groups.}, Inventiones mathematicae \textbf{93} (1988), no.~1, 35--76.

\end{thebibliography}

\appendix

\section{OSCAR implementation}\label{sec:oscar-impl}
In addition to the pseudo-code in \Cref{sec:algos-structs}, we provide an implementation of partitions and their operations in OSCAR~\cite{OSCAR}, 
an open-source computer algebra system written in Julia that combines the capabilities of various systems like GAP, Polymake and Singular. The concrete implementation details can be found on the official GitHub repository of OSCAR:
\begin{center}
    \url{https://github.com/oscar-system/Oscar.jl/tree/master/experimental/SetPartitions}
\end{center}

In the following, we give a short demo presenting the basic functionality of our implementation. The main data-structure is given by \texttt{SetPartition}, which implements partitions and is accessible through the \texttt{set\_partition} constructor. In contrast to \Cref{def:part-struct}, a partition is constructed from two vectors, where the first vector represents the $k$ upper points and the second vector the $\ell$ lower point. 

\begin{lstlisting}
In: using Oscar
In: set_partition([2, 4], [4, 99])
Out: SetPartition([1, 2], [2, 3])
\end{lstlisting}

Note that the resulting partition is automatically normalized as described in \Cref{algo:normalize}. Thus, it is possible to directly compare two partitions or store them in a hash map. To apply basic operations on partitions, we can use the following functions.

\begin{lstlisting}
In: p = set_partition([1, 2], [2, 1])
In: q = set_partition([1, 1], [1])
In: tensor_product(p, q)
Out: SetPartition([1, 2, 3, 3], [2, 1, 3])
\end{lstlisting}

\begin{lstlisting}
In: p = set_partition([1, 2, 2], [1, 1, 3])
In: involution(p)
Out: SetPartition([1, 1, 2], [1, 3, 3])
\end{lstlisting}

\begin{lstlisting}
In: p = set_partition([1, 2, 2], [1, 2])
In: q = set_partition([1], [2, 2, 1])
In: compose(p, q)
Out: SetPartition([1], [1, 1])
\end{lstlisting}

Similarly, the vertical reflections and rotations can be computed using functions like \texttt{reflect\_vertical} and \texttt{rotate\_top\_left}.
Moreover, we also support colored partitions and spatial partitions via the structures \texttt{ColoredPartition} and \texttt{SpatialPartition}.

In addition to the previous algorithms and data-structures, we provide a function \texttt{construct\_category}. The first argument is a vector of generating partitions and the second argument is a bound $n$. The functions then constructs all partitions of size $n$ that can be constructed from the given generating set and base partitions without the size of intermediate partitions exceeding $n$. The following example corresponds to the category of all non-crossing partitions from \Cref{ex:categories-gens}.

\begin{lstlisting}
In: p = set_partition([1], [1, 1])
In: q = set_partition([1], [1])
In: r = set_partition([], [1, 1])
In: length(construct_category([p, q, r], 6))
Out: 924
\end{lstlisting}

Since the size of the output grows exponentially, the function might take a while for larger bounds. Note that in this example, the algorithm constructs all non-crossing partitions of size $6$ since their number is given by $7 \cdot C_6 = 924$, where $C_k$ denotes the $k$-th Catalan number. However, this is not guaranteed in general since one might need to first construct larger partitions to obtain again smaller partition.

\end{document}